\begin{document}

\newcommand{\defi}{\stackrel{\Delta}{=}}
\newcommand{\A}{{\cal A}}
\newcommand{\B}{{\cal B}}
\newcommand{\U}{{\cal U}}
\newcommand{\G}{{\cal G}}
\newcommand{\cZ}{{\cal Z}}
\newcommand\one{\hbox{1\kern-2.4pt l }}
\newcommand{\Item}{\refstepcounter{Ictr}\item[\left(\theIctr\right)]}
\newcommand{\QQ}{\hphantom{MMMMMMM}}

\newtheorem{Theorem}{Theorem}[section]
\newtheorem{Lemma}{Lemma}[section]
\newtheorem{Corollary}{Corollary}[section]
\newtheorem{Remark}{Remark}[section]
\newtheorem{Example}{Example}[section]
\newtheorem{Proposition}{Proposition}[section]
\newtheorem{Property}{Property}[section]
\newtheorem{Assumption}{Assumption}[section]
\newtheorem{Definition}{Definition}[section]
\newtheorem{Construction}{Construction}[section]
\newtheorem{Condition}{Condition}[section]
\newtheorem{Exa}[Theorem]{Example}
\newcounter{claim_nb}[Theorem]
\setcounter{claim_nb}{0}
\newtheorem{claim}[claim_nb]{Claim}
\newenvironment{cproof}
{\begin{proof}
 [Proof.]
 \vspace{-3.2\parsep}}
{\renewcommand{\qed}{\hfill $\Diamond$} \end{proof}}
\newcommand{\erhao}{\fontsize{21pt}{\baselineskip}\selectfont}
\newcommand{\xiaoerhao}{\fontsize{18pt}{\baselineskip}\selectfont}
\newcommand{\sanhao}{\fontsize{15.75pt}{\baselineskip}\selectfont}
\newcommand{\sihao}{\fontsize{14pt}{\baselineskip}\selectfont}
\newcommand{\xiaosihao}{\fontsize{12pt}{\baselineskip}\selectfont}
\newcommand{\wuhao}{\fontsize{10.5pt}{\baselineskip}\selectfont}
\newcommand{\xiaowuhao}{\fontsize{9pt}{\baselineskip}\selectfont}
\newcommand{\liuhao}{\fontsize{7.875pt}{\baselineskip}\selectfont}
\newcommand{\qihao}{\fontsize{5.25pt}{\baselineskip}\selectfont}
\newcounter{Ictr}
\renewcommand{\theequation}{%\thesection.
\arabic{equation}}
\renewcommand{\thefootnote}{\fnsymbol{footnote}}

\def\A{\mathcal{A}}

\def\C{\mathcal{C}}

\def\V{\mathcal{V}}

\def\I{\mathcal{I}}

\def\Y{\mathcal{Y}}

\def\X{\mathcal{X}}

\def\J{\mathcal{J}}

\def\Q{\mathcal{Q}}

\def\W{\mathcal{W}}

\def\S{\mathcal{S}}

\def\T{\mathcal{T}}

\def\L{\mathcal{L}}

\def\M{\mathcal{M}}

\def\N{\mathcal{N}}
\def\R{\mathbb{R}}
\def\H{\mathbb{H}}

\title{}
\author{}
%%%%%%%%%%%%%%%%%%%%%%%%%%%%%%%%%%%%%%%%%%%%%%%%%%%%%%%%%%%%%%%%%%%%%%%%
\begin{center}
\topskip2cm
\LARGE{\bf Tuning parameter selection for the adaptive nuclear norm regularized trace regression}
\end{center}

\begin{center}
\renewcommand{\thefootnote}{\fnsymbol{footnote}}Pan Shang,  Lingchen Kong, Yiting Ma \footnote{Address: Pan Shang is  with Academy of Mathematics and Systems Science, Chinese Academy of Sciences, Beijing 100190, China; Lingchen Kong  and Yiting Ma are with the School of Mathematics and Statistics, Beijing Jiaotong University, Beijing, 100044, China.\\
E-mail: pshang@amss.ac.cn}\\
\today
%(June 14th, 2021)\\
\end{center}
\vskip4pt
\textbf{Abstract:} Regularized models have been applied in lots of areas, with high-dimensional data sets being popular. Because tuning parameter decides the theoretical performance and computational efficiency of the regularized models, tuning parameter selection is a basic and important issue.   We consider the tuning parameter selection for adaptive nuclear norm regularized trace regression, which achieves by the Bayesian information criterion (BIC). The proposed BIC is established with the help of an unbiased estimator of degrees of freedom. Under some regularized conditions, this BIC is proved to achieve the rank consistency of the tuning parameter selection. That is the model solution under selected tuning parameter converges to the true solution and has the same rank with that of the true solution in probability. Some numerical results are presented to evaluate the performance of the proposed BIC on tuning parameter selection.\\
\noindent \emph{Keywords:} Tuning parameter selection, Adaptive nuclear norm regularized trace regression, Bayesian information criterion, Degrees of freedom\\
\noindent \emph{MSCcodes(2020):} 62J12, 62H12, 90C25, 15A24
\section{Introduction}
To deal with high-dimensional data sets with special structures, there are plenties of works that focus on regularized models. These models can be unified as a minimization problem, whose objective function is composed with a loss function and some regularizer. So, existing works consider different loss functions and regularizer under different data settings. For instance, the famous Lasso\cite{tibshirani1996regression} and lots of similar models\cite{tibshirani2005sparsity,zou2005regularization,yuan2006model,
zou2006adaptive,zou2009adaptive,fan2001variable,zhang2010nearly} inducing sparsity, regularized matrix regression\cite{zhou2014regularized} and some other models inducing low-rank solution\cite{yuan2007dimension, rothman2010sparse,lu2012convex, 2014Multivariate, yuan2016degrees, zhu2020convex, wei2022high, zou2022estimation,mazumder2010spectral, zhou2014regularized, zhao2017trace, elsener2018robust, fan2019generalized, hamidi2022low} and so on. Undoubtedly, these regularized models have been applied to lots of areas, see e.g., bioinformatics\cite{zhao2010low}, engineering\cite{2010Statistical}, signal process\cite{2016An}, system identification\cite{2010Interior} and so on. Among these models, there is an interesting one called the adaptive nuclear norm regularized trace regression, which is proposed by Bach\cite{bach2008consistency} and achieves the rank consistency. Therefore, we focus on this model in the paper.

For regularized models, their loss function and regularizer are summed together by the so-called tuning parameter. It seems that tuning parameter is insignificant, while the selection of this parameter decides the theory performance and computational effect of these models. Therefore, there are some tools can be used to select this parameter\cite{wu2020survey}, such as Akaike information criterion (AIC,\cite{akaike1970statistical}), Bayesian information criterion (BIC,\cite{schwarz1978estimating,wang2007unified,
wang2009shrinkage,wang2007tuning, wang2011consistent,kim2012consistent,fan2013tuning,hirose2013tuning,sun2013consistent,yaguang2018consistent,abbruzzo2019selecting}), cross-validation (\cite{homrighausen2013lasso, homrighausen2017risk,lei2020cross,datta2020note, chetverikov2021cross}), and so on. Here, we review some related works about existing tuning parameter selection methods for regularized models, which include regularized liner models with the unknown variable being in vector form and regularized matrix models with matrix as variable, respectively.  The following results mainly focus on tuning parameter selection for regularized linear models. Wang et al. \cite{wang2007tuning} established a BIC for SCAD under the case that the sample size $n$ and feature size $p$ satisfying $n>p$ and $p$ fixed. Wang et al. \cite{wang2009shrinkage} considered the BIC for adaptive Lasso under $n>p$ and $p$ diverged. Wang and Zhu \cite{wang2011consistent} builded up a high dimensional BIC (HBIC) for adaptive elastic net under $n<p$ and $p$ diverged. Kim et al. \cite{kim2012consistent} builded a generalized information criterion for SCAD and MCP under different data settings, including $n<p$ with $p$ fixed and diverged. Hirose et al. \cite{hirose2013tuning} considered the extended  regularized information criterion for adaptive Lasso. Li et al. \cite{yaguang2018consistent} established the regularized information criterion for adaptive group regularized generalized linear models. These proposed information criteria all were proved to have the consistency property, that is the selected tuning parameter will lead to a solution with the same active index as the true one in probability, when the sample size increases. So, these regularized linear models all have an important property, which is the selection consistency. In addition, there are a few tuning parameter selection results for regularized matrix regression. Zhou and Li \cite{zhou2014regularized} proposed the AIC and BIC for regularized matrix regression under the orthogonal assumption of prediction matrix. Yuan \cite{yuan2016degrees} calculated the degree of freedom of regularized matrix regression, which is the basic for some tuning parameter selection methods. However, these results of regularized matrix regression have no theoretical guarantees of consistency.

In this paper, we tends to select the tuning parameter for adaptive nuclear norm regularized trace regression proposed by \cite{bach2008consistency}. The core reason we consider this model is its rank consistency, i.e., the model solution achieved the true rank in probability with the sample size increasing. In addition, this model also includes the regularized matrix regression models in \cite{zhou2014regularized} and \cite{yuan2006model} as special cases. We first establish some basic results, including the subdifferential and conjugate function of the adaptive nuclear norm. These results help to analyze the model and provide tools to establish its degrees of freedom, which is en essential element to set up the tuning parameter selection criterion. Based on the definition in \cite{efron2004estimation}, we calculate the unbiased estimation of degrees of freedom, according to the model optimality and basic results of the model. Although the degrees of freedom seems like a complex form, we prove that this result includes the existing result in \cite{zhou2014regularized} as a special case. With the help of the estimator of the degrees of freedom, we propose a BIC to select the tuning parameter for the adaptive nuclear norm regularized trace regression. The proposed BIC is proved to achieve the rank selection consistency in probability, under some regularized conditions. That is, the tuning parameter selected by BIC achieves selection consistency and rank selection consistency simultaneously. On numerical experiments, we compare the proposed BIC with AIC, AICc and cross validation for tuning parameter selection on some simulation data and real data sets, which illustrates the efficiency of our criterion.

This paper is organized as follows. We review the adaptive nuclear norm regularized trace regression and present some basic results in Section 2. In Section 3, we establish the degrees of freedom and BIC with theoretical guarantees. Some numerical experiments and conclusion are showed in Section 4 and Section 5, respectively.

\emph{Notations}:
Let $M\in\mathbb{R}^{p_{1}\times p_{2}}$ be a matrix. vec$(M)$ denotes the vector in $\mathbb{R}^{p_{1}p_{2}}$ obtained by stacking its columns into a single vector. For any $j\in\{1,2,\cdots,p_{1}\}$ and $k\in\{1,2,\cdots,p_{2}\}$, $M_{:,j}$ means the $j_{th}$ column  and $M_{k,:}$ means the $k_{th}$ row of $M$. Suppose $M$ has a singular value decomposition with nonincreasing singular values $\sigma_{1}(M)\geq \cdots \geq \sigma_{r}(M)>0$ and $r\leq\rm{\min\left\{\textit{p}_{1}, \textit{p}_{2}\right\}}$ is the rank of $M$. There are some related  norms with singular values of $M$. The Frobenius norm $\|\cdot\|_{F}$ is defined as $\|M\|_{F}=\sqrt{\sum_{i=1}^{p_{1}}\sum _{j=1}^{p_{2}}M_{i,j}^{2}}=\sqrt{\sigma^{2}_{1}(M)+\cdots+\sigma^{2}_{r}(M)}$. The nuclear norm $\|\cdot\|_{*}$ is  the sum of non-zero singular values, i.e., $\|M\|_{*}=\sum_{i=1}^{r}{\sigma_{i}(M)}$. The spectral norm $\|\cdot\|_{2}$ is the largest singular value, i.e., $\|M\|_{2}=\sigma_{1}(M)$.  A symmetric matrix $M\in\mathbb{R}^{p\times p}$ is called positive semidefinite (positive definite), denoted as $M\succeq0$ $(M\succ0)$, if $\textbf{\textit{x}}^{\top}M\textbf{\textit{x}}\geq0$ $ (\textbf{\textit{x}}^{\top}M\textbf{\textit{x}}>0)$ holds for any $0\neq\textbf{\textit{x}}\in\mathbb{R}^{p}$. Any $M\in\mathbb{O}^{p}$ means that $M\in\mathbb{R}^{p\times p}$ and $M^{\top}M=I_{p}$. For any vector $\textbf{\textit{x}}\in\mathbb{R}^{p}$,
the norm $\|\cdot\|$ is defined as $\|\textbf{\textit{x}}\|=\sqrt{\sum_{i=1}^{p}x_{i}^{2}}$. The notation $\textbf{\textit{x}}\in \mathbb{R}^{p}_{++}$ means all elements of $\textbf{\textit{x}}$ are positive. The indictor function of a set $\mathcal{A}$ is denoted as $\delta_{\mathcal{A}}(\cdot)$, which means the value of this function is zero when variable in the set $\mathcal{A}$ and is $\infty$ otherwise. For any index set $I$, its cardinal number  is denoted as $|I|$ that counts the number in the index set $I$, and its complementary set is denoted as $I^{c}$.  In this paper, the notation 0 may represent scalar, vector and matrix, which can be inferred from the context.
\section{Preliminaries}
This section presents the adaptive nuclear norm regularized trace regression with its analysis and show some optimal results of this model, which are foundations of the tuning parameter selection.
\subsection{Model Analysis}
\label{2.1}
The statistical model of the trace regression is
\begin{eqnarray*}
y = \left\langle X,B^{*}\right\rangle+\epsilon,
\end{eqnarray*}
where $X\in \mathbb{R}^{p_{1}\times p_{2}}$ is the predictor, $y\in \mathbb{R}$ is the responsor, $\epsilon\in \mathbb{R}$ is a random error and $B^{*}\in \mathbb{R}^{p_{1}\times p_{2}}$ is the true coefficient. By sampling $n$ times, we get
\begin{eqnarray*}
y_{i} = \left\langle X_{i},B^{*}\right\rangle+\epsilon_{i}, \quad i=1,2,\cdots,n.
\end{eqnarray*}
Let $\mathcal{X}=\left(\rm{vec}\left(\textit{X}_{1}\right),\cdots,\rm{vec}\left(\textit{X}_{\textit{n}}\right)\right)^{\top}\in\mathbb{R}^{\textit{n}\times \textit{p}_{1} \textit{p}_{2}}$, $\textbf{y}=(y_{1},\cdots,y_{n})^{\top}\in\mathbb{R}^{n}$ and $\varepsilon=(\epsilon_{1},\cdots,\epsilon_{n})^{\top}\in\mathbb{R}^{n}$. These sample models can be written as
\begin{eqnarray*}
\textbf{y} = \mathcal{X}\rm{vec}\left(\textit{B}^{*}\right)+\varepsilon.
\end{eqnarray*}
To estimate the unknown matrix $B^{*}$, there are some literatures focus on the nuclear norm regularized trace regression, such as Koltchinskii et al. \cite{2011Nuclear},  Negahban and Wainwright \cite{negahban2011estimation}, Zhou and Li \cite{zhou2014regularized} and so on, which is
\begin{eqnarray*}
\underset{B\in \mathbb{R}^{p_{1}\times p_{2}}}\min\left\{\frac{1}{2n}\sum\limits_{i=1}^{n}\left(y_{i}-\left\langle X_{i},B\right\rangle\right)^{2}+\lambda\|B\|_{*}\right\}.
\end{eqnarray*}
Here, $\lambda>0$ is the tuning parameter. However, this model  can hardly achieve the rank consistent solution in high-dimensional case as in Bach \cite{bach2008consistency}. So, an adaptive version  was proposed as
\begin{eqnarray}\label{eq1}
\underset{B\in \mathbb{R}^{p_{1}\times p_{2}}}\min\left\{\frac{1}{2n}\sum\limits_{i=1}^{n}\left(y_{i}-\left\langle X_{i},B\right\rangle\right)^{2}+\lambda\|W_{1}BW_{2}\|_{*}\right\},
\end{eqnarray}
where $W_{1}$ and $W_{2}$ are given weight matrixes based on the solution of least square trace regression. The detailed results of these matrixes are showed as follows.
Let
$$\hat{B}_{LS}=\underset{B\in \mathbb{R}^{p_{1}\times p_{2}}}{\arg\min}\left\{\frac{1}{2n}\sum\limits_{i=1}^{n}\left(y_{i}-\left\langle X_{i},B\right\rangle\right)^{2}\right\}.$$
Suppose $\hat{B}_{LS}$ has full rank. Let the singular value decomposition of $\hat{B}_{LS}$ be $\hat{B}_{LS}=U_{LS}$Diag$\left(s^{LS}\right)V^{\top}_{LS}$,
where $U_{LS}\in\mathbb{O}^{p_{1}}$, $V_{LS}\in\mathbb{O}^{p_{2}}$, $s^{LS}\in \mathbb{R}^{\min\{p_{1},p_{2}\}}$
and Diag$\left(s^{LS}\right)$ is a matrix whose diagonal vector is the singular vector $s^{LS}$. $s^{LS}$ can be completed by $n^{-1/2}$ to reach dimensions $p_{1}$ or $p_{2}$. Then,
\begin{center}
$W_{1}=U_{LS}\rm{Diag}\left(\textit{s}^{\textit{LS}}\right)^{-\gamma}$$\textit{U}^{\top}_{\textit{LS}}$, $W_{2}=V_{LS}\rm{Diag}\left(\textit{s}^{\textit{LS}}\right)^{-\gamma}$$\textit{V}^{\top}_{\textit{LS}}$, $\gamma\in(0,1]$.
\end{center}
Based on expressions  of  $W_{1}$ and $W_{2}$, they are obvious symmetric matrixes, i.e., $W^{\top}_{1}=W_{1}$ and
$W^{\top}_{2}=W_{2}$. In addition, $W^{-1}_{1}=U_{LS}\rm{Diag}\left(\textit{s}^{\textit{LS}}\right)^{\gamma}$$\textit{U}^{\top}_{\textit{LS}}$ and $W^{-1}_{2}=V_{LS}\rm{Diag}\left(\textit{s}^{\textit{LS}}\right)^{\gamma}$$\textit{V}^{\top}_{\textit{LS}}$.

%For any $\lambda>0$, denote the solution of (\ref{eq1}) as $\hat{B}(\lambda)$. It is well known that the tuning parameter selection is an essential and important issue for (\ref{eq1}).
%The reasons can be summarized as follows. Theoretically, the estimation bound of (\ref{eq1}) relies on the choice of $\lambda$, which is a criterion to evaluate the performance of models. Computationally, the computational time varies with different $\lambda$.
\subsection{Basic Results}
\label{2.2}
In this section, some related results are  proposed and reviewed, which provide theoretical guarantees for the rest of this paper. The following definitions are from Rockafellar \cite{rockafellar2015convex}.
\begin{Definition}
Let $f:\mathbb{R}^{p_{1}\times p_{2}}\rightarrow (-\infty,+\infty]$ be a proper closed convex function and let $M\in \mathbb{R}^{p_{1}\times p_{2}}$.
%A matrix $G\in \mathbb{R}^{p_{1}\times p_{2}}$ is called a subgradient of $f$ at $M$ if
%\begin{center}
%$f(N)\geq f(M)+\langle G,N-M\rangle$,  $\forall N\in \mathbb{R}^{p_{1}\times p_{2}}$.
%\end{center}
The  subdifferential of $f$ at $M$ is denoted by $\partial f(M)$ and
\begin{center}
$\partial f(M)=\left\{G\in \mathbb{R}^{p_{1}\times p_{2}}: f(N)\geq f(M)+\langle G,N-M\rangle, \forall N\in \mathbb{R}^{p_{1}\times p_{2}} \right\}$.
\end{center}
\end{Definition}
According to this definition, we compute the subdifferential of $f(M)=\|W_{1}MW_{2}\|_{*}$ as follows.
\begin{Proposition}\label{pro2.1}
For any $M\in\mathbb{R}^{p_{1}\times p_{2}}$, let $r$ be the rank of $W_{1}MW_{2}$. The subdifferential of $\|W_{1}MW_{2}\|_{*}$ is
\begin{center}
$\partial \|W_{1}MW_{2}\|_{*}=\left\{ W_{1}\left(U_{r}V^{\top}_{r}+N\right)W_{2}:W_{1}MW_{2}=U_{r}\rm{Diag}\left(\sigma_{\textit{r}}\right)\textit{V}^{\top}_{\textit{r}},\|\textit{N}\|_{2}\leq1,
\textit{U}^{\top}_{\textit{r}}\textit{N}=0,\textit{NV}_{\textit{r}}=0\right\},$
\end{center}
where $U_{r}\in\mathbb{R}^{p_{1}\times r}$, $V_{r}\in\mathbb{R}^{p_{2}\times r}$, $\sigma_{r}\in\mathbb{R}^{r}_{++}$ and $N\in\mathbb{R}^{p_{1}\times p_{2}}$.
\end{Proposition}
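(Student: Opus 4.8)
The plan is to recognize $f(M)=\|W_1 M W_2\|_*$ as the composition of the nuclear norm with the linear map $\mathcal{L}(M)=W_1 M W_2$, and to transport the classical subdifferential of the nuclear norm through this map. The decisive structural fact is that $W_1$ and $W_2$ are not only symmetric but \emph{invertible}: since $\hat B_{LS}$ has full rank and $s^{LS}$ is completed by $n^{-1/2}$, every diagonal entry of $\mathrm{Diag}(s^{LS})$ is positive, so $\mathcal{L}$ is a bijection on $\mathbb{R}^{p_{1}\times p_{2}}$. This invertibility is what lets me pass between $M$ and its image without any constraint qualification.

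First I would perform a change of variables in the defining subgradient inequality. By the definition of the subdifferential, $G\in\partial f(M)$ means $\|W_1 N W_2\|_*\ge\|W_1 M W_2\|_*+\langle G,N-M\rangle$ for all $N\in\mathbb{R}^{p_{1}\times p_{2}}$. Setting $Z=W_1 N W_2$ (a bijection, as noted) and writing $\tilde M=W_1 M W_2$, this becomes $\|Z\|_*\ge\|\tilde M\|_*+\langle G,\,W_1^{-1}Z W_2^{-1}-M\rangle$ for all $Z$. Next I would move the weights onto $G$ using the symmetry of $W_1^{-1}$ and $W_2^{-1}$: for any $Z$ one has $\langle G,\,W_1^{-1}Z W_2^{-1}\rangle=\mathrm{tr}\!\big(W_2^{-1}G^{\top}W_1^{-1}Z\big)=\langle W_1^{-1}G W_2^{-1},\,Z\rangle$, and the same cyclic-trace manipulation shows $\langle G,M\rangle=\langle W_1^{-1}G W_2^{-1},\tilde M\rangle$. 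Hence, with $H:=W_1^{-1}G W_2^{-1}$, the whole inequality reads $\|Z\|_*\ge\|\tilde M\|_*+\langle H,\,Z-\tilde M\rangle$ for all $Z$, i.e. $H\in\partial\|\cdot\|_*(\tilde M)$, while $G=W_1 H W_2$. This gives the identity $\partial f(M)=W_1\,\partial\|\cdot\|_*(\tilde M)\,W_2$.

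Finally I would invoke the classical characterization of the subdifferential of the nuclear norm: if $\tilde M=W_1 M W_2$ has reduced singular value decomposition $\tilde M=U_r\mathrm{Diag}(\sigma_r)V_r^{\top}$ of rank $r$, then $\partial\|\cdot\|_*(\tilde M)=\{U_r V_r^{\top}+N:\ \|N\|_2\le1,\ U_r^{\top}N=0,\ NV_r=0\}$. Substituting this description into $G=W_1 H W_2$ produces exactly the set claimed in the proposition, with $\sigma_r\in\mathbb{R}^{r}_{++}$ collecting the nonzero singular values. The main obstacle, and the only genuinely nontrivial ingredient, is establishing this nuclear-norm subdifferential formula; the rest is the routine invertible change of variables. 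I would either cite it from the convex-analysis literature or derive it from the fact that $\|\cdot\|_*$ is the support function of the spectral-norm unit ball, verifying the three conditions $\|N\|_2\le1$, $U_r^{\top}N=0$, $NV_r=0$ through the von Neumann trace inequality and the equality case therein.
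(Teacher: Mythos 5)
Your proof is correct, but it takes a different route from the paper's. The paper treats $f(M)=\|W_1MW_2\|_*$ as a composition $h\circ g$ with $g(M)=W_1MW_2$ linear, invokes the subdifferential chain rule for linear compositions (Beck, Theorem 3.43) to get $\partial f(M)=g^{\top}\bigl(\partial h(g(M))\bigr)$, computes the adjoint $g^{\top}(N)=W_1NW_2$ from the symmetry of the weights, and then plugs in Watson's characterization of $\partial\|\cdot\|_*$ --- exactly the same final ingredient you use. You instead bypass the chain rule entirely: exploiting that $W_1$ and $W_2$ are invertible (which the paper's construction does guarantee, since it writes down $W_1^{-1}$ and $W_2^{-1}$ explicitly), you perform a change of variables $Z=W_1NW_2$ directly in the defining subgradient inequality and shuttle the weights across the inner product via symmetry, obtaining $\partial f(M)=W_1\,\partial\|\cdot\|_*\bigl(W_1MW_2\bigr)\,W_2$ by hand. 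Your argument is more elementary and self-contained (no constraint-qualification or chain-rule theorem needed), and the bijection argument is airtight as written. What the paper's route buys in exchange is generality: the chain rule with the adjoint holds whenever the outer function is finite-valued, so the paper's proof would survive even if the weight matrices were singular, whereas your change of variables would collapse in that degenerate case. In the setting at hand both hypotheses are met, so the two proofs are equally valid; note only that your reduction of Watson's formula to the von Neumann trace inequality is left as a sketch, but since the paper itself cites Watson rather than reproving it, that is a fair parallel.
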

\begin{proof}
It is clear that $f(M)=h(g(M))$ with $g(M)=W_{1}MW_{2}$ and $h(M)=\lambda\|M\|_{*}$. Because $g$ is a linear transformation,
according to \cite[Theorem 3.43]{BECK2017FIRST},
\begin{center}
$\partial f(M)=g^{\top}\left(\partial_{g(M)}h(g(M))\right)$
\end{center}
with $g^{\top}$ being the adjoint operator of $g$.

According to the definition of the adjoint operator, $\left\langle g(M),N\right\rangle=\left\langle M,g^{\top}(N)\right\rangle$, which leads to
\begin{center}
 $g^{\top}(N)=W^{\top}_{1}NW^{\top}_{2}=W_{1}NW_{2}$.
\end{center}

Based on  the singular value decomposition of $W_{1}MW_{2}$ and the subdifferential of the nuclear norm in Watson \cite{watson1992characterization}, it holds that
\begin{center}
$\partial_{g(M)}h(g(M))=\left\{U_{r}V^{\top}_{r}+N:\|N\|_{2}\leq1,U^{\top}_{r}N=0,NV_{r}=0\right\}$.
\end{center}
Combing these results, the desired result can be obtained.
\end{proof}

There are some basic calculation rule in \cite{horn2012matrix,magnus2019matrix} that are foundations of our result, which we review them in the next proposition.
\begin{Proposition}\label{pro2.2}
Let $A,B,C,D$ be any  matrix in compatible dimensions, the following claims hold.
\begin{enumerate}
\item[i)]{vec$(ABC)=\left(C^{\top}\bigotimes A\right)$vec$(B)$, where $\bigotimes$ means the Kronecker product.}
\item[ii)]{vec$\left(A^{\top}\right)=K$vec$(A)$, where $K$ is the permutation matrix.}
\item[iii)]{$(A\bigotimes B)(C\bigotimes D)=AC\bigotimes BD$.}
\item[iv)]{$\frac{\partial vec(ABC)}{\partial vec^{\top}(B)}=C^{\top}\bigotimes A$.}
\end{enumerate}
\end{Proposition}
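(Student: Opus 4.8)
The plan is to handle the four claims separately, as they are standard vec--Kronecker identities drawn from \cite{horn2012matrix,magnus2019matrix}. Only claim (i) requires genuine work, and claim (iv) then follows from it immediately; claim (ii) is essentially the defining property of the permutation (commutation) matrix, and claim (iii) is a direct block computation. Because the cleanest derivation of (i) invokes the mixed-product rule (iii), I would establish (iii) first and only then prove (i) and (iv).

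For (iii), I would argue directly from block structure. Writing $A\otimes B$ as the block matrix with $(i,j)$ block $A_{ij}B$, the product $(A\otimes B)(C\otimes D)$ has $(i,k)$ block $\sum_{j}(A_{ij}B)(C_{jk}D)=\sum_{j}A_{ij}C_{jk}\,BD=(AC)_{ik}\,BD$, which is exactly the $(i,k)$ block of $AC\otimes BD$. For (i) I would reduce to the rank-one case by linearity: writing $B=\sum_{i,j}B_{ij}\,e_{i}e_{j}^{\top}$, both sides of the identity are linear in $B$, so it suffices to check vec$(A e_{i}e_{j}^{\top}C)=(C^{\top}\otimes A)\,$vec$(e_{i}e_{j}^{\top})$ on each basis matrix. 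On the left, $A e_{i}e_{j}^{\top}C=(Ae_{i})(e_{j}^{\top}C)=A_{:,i}\,C_{j,:}$ is an outer product, whose vectorization is $C_{j,:}^{\top}\otimes A_{:,i}$, using the elementary fact vec$(uv^{\top})=v\otimes u$ for column vectors $u,v$. On the right, vec$(e_{i}e_{j}^{\top})=e_{j}\otimes e_{i}$, so by (iii) one has $(C^{\top}\otimes A)(e_{j}\otimes e_{i})=(C^{\top}e_{j})\otimes(Ae_{i})=C_{j,:}^{\top}\otimes A_{:,i}$, matching the left side; summing over $i,j$ yields (i).

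For (ii), I would observe that transposition merely permutes the $p_{1}p_{2}$ entries of $A$, so there is a unique permutation matrix $K$ carrying the column-stacking of $A$ to the column-stacking of $A^{\top}$; this $K$ is the commutation matrix and vec$(A^{\top})=K\,$vec$(A)$ holds by construction. Finally, (iv) is an immediate consequence of (i): since vec$(ABC)=(C^{\top}\otimes A)\,$vec$(B)$ is a linear map in vec$(B)$ with constant coefficient matrix $C^{\top}\otimes A$, its Jacobian with respect to vec$^{\top}(B)$ equals that same matrix. None of the steps is deep, so there is no serious obstacle; the only point demanding care is the index bookkeeping in (i)---specifically the orientation conventions vec$(uv^{\top})=v\otimes u$ and vec$(e_{i}e_{j}^{\top})=e_{j}\otimes e_{i}$---together with ensuring (iii) is in place before it is used there. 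Alternatively, all four identities are classical and could simply be quoted from \cite{horn2012matrix,magnus2019matrix}.
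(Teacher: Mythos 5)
Your proofs are correct, but note that the paper itself does not prove Proposition 2.2 at all: it presents the four identities purely as a review of standard facts, citing \cite{horn2012matrix,magnus2019matrix}, exactly as you suggest in your closing sentence. So your submission goes beyond the paper rather than diverging from it. Your ordering of the work is sound --- establishing the mixed-product rule (iii) first by the block computation, then reducing (i) to rank-one matrices $e_{i}e_{j}^{\top}$ via linearity and the conventions $\mathrm{vec}(uv^{\top})=v\otimes u$ and $\mathrm{vec}(e_{i}e_{j}^{\top})=e_{j}\otimes e_{i}$, then reading (iv) off as the constant Jacobian of the linear map in (i), and treating (ii) as the defining property of the commutation matrix. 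The trade-off between the two routes is the usual one: the paper's citation keeps the preliminaries short, since these are textbook identities whose proofs add no insight specific to the trace-regression setting; your self-contained derivation, on the other hand, pins down the column-stacking and Kronecker-ordering conventions explicitly, which is genuinely useful here because those orientation choices are exactly what is exercised later in the degrees-of-freedom calculations (Theorem 3.1 and Lemmas 3.1--3.2), where a transposed convention would silently corrupt the formulas for $M_{r}^{(1)}$ and $M_{r}^{(2)}$.
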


\section{Bayesian Information Criterion}
\label{3}

In this section, we tend to build up a Bayesian Information Criterion (BIC) and prove its consistency to select the appropriate tuning parameter for (\ref{eq1}).

According to the definition of traditional BIC, a key basis is degrees of freedom of the model. So, we calculate degrees of freedom of (\ref{eq1}) in the next theorem.
\begin{Theorem}\label{lemma df}
For any $\lambda \geq0$, let $r$ be the rank of $W_{1}\hat{B}(\lambda)W_{2}$. Suppose the singular value decomposition of $W_{1}\hat{B}(\lambda)W_{2}$ is $W_{1}\hat{B}(\lambda)W_{2}=U_{r}\rm{Diag}\left(\textit{b}_{\textit{r}}\right)\textit{V}_{\textit{r}}^{\top}$ with $U_{r}\in\mathbb{R}^{p_{1}\times r}$, $V_{r}\in\mathbb{R}^{p_{2}\times r}$ and $b_{r}\in\mathbb{R}^{r}_{++}$. An unbiased estimation of degrees of freedom of (1) under $\lambda$  satisfies that
\begin{center}
$\hat{\rm{df}}_{\lambda}=\frac{1}{\textit{n}}\sum\limits_{\textit{k}=1}^{\textit{n}}
\rm{vec}\left(\textit{X}_{\textit{k}}\right)^{\top}$$M_{r}^{+}\rm{vec}\left(\textit{X}_{\textit{k}}\right).$
\end{center}
Here, $M_{r}=\frac{1}{\textit{n}}\sum\limits_{\textit{k}=1}^{\textit{n}}\rm{vec}\left(\textit{X}_{\textit{k}}\right)
\rm{vec}\left(\textit{X}_{\textit{k}}\right)^{\top}$$+\lambda M^{(1)}_{r}+M^{(2)}_{r}$ is assumed full rank or symmetric, with
$M^{(1)}_{r}$ and $M^{(2)}_{r}$ defined in Lemma \ref{lemma3.1} and Lemma \ref{lemma3.2}, respectively.
%\begin{center}
%P$\left(\hat{\rm{df}}_{\lambda}=\underset{n\rightarrow\infty}{\rm{lim}}\frac{1}{\textit{n}}\sum\limits_{\textit{k}=1}^{\textit{n}}
%\rm{vec}(\textit{X}_{\textit{k}})^{\top}\left(\Sigma+\lambda\textit{A}_{r}\right)^{-1}\rm{vec}(\textit{X}_{\textit{k}})
%\right)\rightarrow1$, when  $n\rightarrow\infty$.
%\end{center}
%Here,
%$A_{r}= \left(\textit{W}_{2}\textit{V}_{\textit{r}}\rm{Diag}(\textit{b}_{\textit{r}})^{-1}\textit{V}^{\top}_{\textit{r}}\textit{W}_{2}\right)\bigotimes \textit{W}^{2}_{1}+\lambda \textit{W}^{2}_{2}\bigotimes
%\left(\textit{W}_{1}\textit{U}_{\textit{r}}\rm{Diag}(\textit{b}_{\textit{r}})^{-1}\textit{U}^{\top}_{\textit{r}}\textit{W}_{1}\right).$
\end{Theorem}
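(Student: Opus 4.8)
The plan is to identify the unbiased degrees-of-freedom estimator with the divergence of the fitted response and then obtain that divergence by implicitly differentiating the stationarity condition of (\ref{eq1}). Following Efron's definition \cite{efron2004estimation}, under Gaussian noise Stein's identity gives $\mathrm{df}_\lambda = \sum_{k=1}^{n}\mathbb{E}[\partial \hat y_k/\partial y_k]$, so that $\widehat{\mathrm{df}}_\lambda = \sum_{k=1}^{n}\partial \hat y_k/\partial y_k = \mathrm{tr}(\partial \hat{\mathbf y}/\partial \mathbf y^\top)$ is unbiased, where $\hat y_k = \langle X_k, \hat B(\lambda)\rangle$ is the $k$-th fitted value. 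Writing $\hat{\mathbf y} = \mathcal{X}\,\mathrm{vec}(\hat B(\lambda))$ reduces the task to computing the Jacobian $\partial\,\mathrm{vec}(\hat B(\lambda))/\partial \mathbf y^\top$ and forming $\mathrm{tr}(\mathcal{X}\,\partial\,\mathrm{vec}(\hat B)/\partial \mathbf y^\top)$.

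First I would write the first-order optimality condition for (\ref{eq1}): at $\hat B = \hat B(\lambda)$ one has $\frac1n\sum_{k}(\langle X_k,\hat B\rangle - y_k)X_k + \lambda G = 0$ for some $G\in\partial\|W_1\hat B W_2\|_*$. Using Proposition \ref{pro2.1} to represent $G = W_1(U_rV_r^\top + N)W_2$ and vectorizing with the Kronecker identities of Proposition \ref{pro2.2}, this becomes $\frac1n\mathcal{X}^\top\mathcal{X}\,\mathrm{vec}(\hat B) - \frac1n\mathcal{X}^\top\mathbf y + \lambda\,\mathrm{vec}(G) = 0$. On the region where the rank $r$ of $W_1\hat B(\lambda)W_2$ stays constant, $\hat B(\lambda)$ is a smooth function of $\mathbf y$, so I can differentiate this identity with respect to $\mathbf y^\top$.

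Differentiating, the data-fit term contributes $\frac1n\mathcal{X}^\top\mathcal{X}\,\partial\mathrm{vec}(\hat B)/\partial\mathbf y^\top - \frac1n\mathcal{X}^\top$, while the penalty term contributes the Jacobian of $\lambda\,\mathrm{vec}(G)$ with respect to $\mathrm{vec}(\hat B)$ times $\partial\mathrm{vec}(\hat B)/\partial\mathbf y^\top$. This Jacobian of the subgradient is exactly the object computed in Lemmas \ref{lemma3.1} and \ref{lemma3.2}; collecting its two structurally distinct pieces — the part scaling with the shrinkage level, giving $\lambda M_r^{(1)}$, and the remaining part $M_r^{(2)}$ — the differentiated identity reads $M_r\,\partial\mathrm{vec}(\hat B)/\partial\mathbf y^\top = \frac1n\mathcal{X}^\top$ with $M_r = \frac1n\mathcal{X}^\top\mathcal{X} + \lambda M_r^{(1)} + M_r^{(2)}$. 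Solving (using the Moore--Penrose inverse; the stated full-rank-or-symmetry hypothesis legitimizes this step and the subsequent trace identity) gives $\partial\mathrm{vec}(\hat B)/\partial\mathbf y^\top = \frac1n M_r^+\mathcal{X}^\top$. Substituting and using the cyclic property of the trace together with the fact that the $k$-th row of $\mathcal{X}$ is $\mathrm{vec}(X_k)^\top$ yields $\widehat{\mathrm{df}}_\lambda = \mathrm{tr}(\frac1n\mathcal{X} M_r^+\mathcal{X}^\top) = \frac1n\sum_{k=1}^{n}\mathrm{vec}(X_k)^\top M_r^+\mathrm{vec}(X_k)$, as claimed.

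The main obstacle is the differentiation of the nonsmooth penalty, i.e. computing the Jacobian of the subgradient $W_1(U_rV_r^\top+N)W_2$ with respect to $\mathrm{vec}(\hat B)$: this requires the perturbation theory of the singular vectors and singular values of $W_1\hat B W_2$ (and the projection encoded by $N$), and is precisely what forces the two auxiliary matrices $M_r^{(1)}$ and $M_r^{(2)}$. A secondary technical point is justifying the (almost everywhere) differentiability of $\hat B(\lambda)$ in $\mathbf y$ in the constant-rank regime, so that the divergence representation of $\widehat{\mathrm{df}}_\lambda$ is valid, and checking that the full-rank/symmetry hypothesis on $M_r$ makes the pseudoinverse step and the final trace identity rigorous.
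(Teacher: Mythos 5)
Your strategy coincides with the paper's: identify $\widehat{\mathrm{df}}_\lambda$ with the divergence $\sum_{k}\partial\hat y_k/\partial y_k$ (Stein/Efron), write the stationarity condition of (\ref{eq1}) with the subgradient $W_1\left(U_rV_r^\top+N\right)W_2$ from Proposition \ref{pro2.1}, vectorize via Proposition \ref{pro2.2}, and differentiate implicitly in $\mathbf{y}$ to obtain $M_r\,\partial\,\mathrm{vec}\bigl(\hat B(\lambda)\bigr)/\partial y_k=\frac1n\mathrm{vec}(X_k)$ with $M_r=\frac1n\mathcal{X}^\top\mathcal{X}+\lambda M^{(1)}_r+M^{(2)}_r$, Lemmas \ref{lemma3.1}--\ref{lemma3.2} supplying the two pieces of the Jacobian of the subgradient. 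This is exactly the paper's equation (\ref{eqnewtr}), and in the full-rank case your conclusion follows just as in the paper.

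The gap is the step where you ``solve'' the system in the merely-symmetric case. From $M_r J=\frac1n\mathcal{X}^\top$, where $J=\partial\,\mathrm{vec}\bigl(\hat B(\lambda)\bigr)/\partial\mathbf{y}^\top$, you cannot conclude $J=\frac1n M_r^{+}\mathcal{X}^\top$ when $M_r$ is singular: the system determines $J$ only up to an additive $Z$ whose columns lie in $\ker(M_r)$, and the actual Jacobian has no reason to be the minimum-norm (pseudoinverse) solution. A priori this ambiguity contributes $\mathrm{tr}(\mathcal{X}Z)=\sum_k\mathrm{vec}(X_k)^\top z_k$ to the divergence, so the parenthetical appeal to ``the full-rank-or-symmetry hypothesis'' is not an argument. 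What closes it---and this is the actual content of the paper's symmetric case, its equation (\ref{eqimp})---is that consistency of the linear system itself forces $\mathrm{vec}(X_k)\in\mathrm{range}(M_r)$: left-multiplying $\frac1n\mathrm{vec}(X_k)=M_r\,\partial\,\mathrm{vec}\bigl(\hat B(\lambda)\bigr)/\partial y_k$ by $V_{r_{M}^{\bot}}^\top$ (the eigenvectors for the zero eigenvalue) gives $V_{r_{M}^{\bot}}^\top\mathrm{vec}(X_k)=0$. Since $M_r^{+}M_r$ is the orthogonal projector onto $\mathrm{range}(M_r)$ for symmetric $M_r$, one then writes, with $J_k$ the $k$th column of $J$, $\mathrm{vec}(X_k)^\top J_k=\mathrm{vec}(X_k)^\top M_r^{+}M_r J_k=\frac1n\mathrm{vec}(X_k)^\top M_r^{+}\mathrm{vec}(X_k)$, never identifying $J_k$ itself: the kernel ambiguity is annihilated inside the quadratic form. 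Your final formula is therefore correct, but a complete proof must replace the unjustified identity $J=\frac1n M_r^{+}\mathcal{X}^\top$ by this range/projection argument.
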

\begin{proof}
As the result in Stein \cite{stein1981estimation}, an unbiased estimation of degrees of freedom of (1) under $\lambda$ is
\begin{center}
$\hat{\rm{df}}_{\lambda}=\sum\limits_{k=1}^{n}\frac{\partial \hat{y}_{k}}{\partial y_{k}}=\sum\limits_{k=1}^{n}\left\langle X_{k},\frac{\partial \hat{B}(\lambda)}{\partial y_{k}}\right\rangle=\sum\limits_{k=1}^{n}\left\langle \rm{vec}\left(\textit{X}_{\textit{k}}\right),\frac{\partial \rm{vec}\left(\hat{\textit{B}}(\lambda)\right)}{\partial y_{k}}\right\rangle$.
\end{center}
According to the optimal condition of (1) and the subdifferential of $\|W_{1}BW_{2}\|_{*}$ in Proposition \ref{pro2.1}, there is a matrix $N$ such that $\|N\|_{2}\leq1,U^{\top}_{r}N=0,NV_{r}=0$ and
\begin{eqnarray}\label{eq}
-\frac{1}{n}\sum\limits_{i=1}^{n}\left(y_{i}-\left\langle X_{i},\hat{B}(\lambda)\right\rangle\right)X_{i}+\lambda W_{1}G\left(\hat{B}(\lambda)\right)W_{2}+\lambda W_{1}NW_{2}=0,
\end{eqnarray}
where $G\left(\hat{B}(\lambda)\right)=U_{r}V^{\top}_{r}$. Vectorizing this equality and using $i)$ in Proposition \ref{pro2.2}, we get that
\begin{eqnarray} \label{eqop}
\frac{1}{n}\sum\limits_{\textit{i}=1}^{\textit{n}}y_{i}\rm{vec}\left(\textit{X}_{\textit{i}}\right)
=\frac{1}{\textit{n}}\sum\limits_{\textit{i}=1}^{\textit{n}}\rm{vec}\left(\textit{X}_{\textit{i}}\right)
\rm{vec}\left(\textit{X}_{\textit{i}}\right)^{\top}\rm{vec}\left(\hat{\textit{B}}(\lambda)\right)+\lambda \left(\textit{W}_{2} \bigotimes \textit{W}_{1}\right)\rm{vec}\left(\textit{G}\left(\hat{\textit{B}}(\lambda)\right)\right)+\lambda\left(\textit{W}_{2} \bigotimes \textit{W}_{1}\right)\rm{vec}(\textit{N}).
\end{eqnarray}
The result of the partial differential of the equation (\ref{eqop}) with respect of $y_{k}$ is
\begin{equation}\label{eqnew}
\begin{aligned}
\frac{1}{n}\rm{vec}(\textit{X}_{\textit{k}})&=\left[\frac{1}{\textit{n}}
\sum\limits_{\textit{i}=1}^{\textit{n}}\rm{vec}\left(\textit{X}_{\textit{i}}\right)
\rm{vec}\left(\textit{X}_{\textit{i}}\right)^{\top}
+\lambda \left(\textit{W}_{2}\bigotimes \textit{W}_{1}\right)\frac{\partial \rm{vec}\left(\textit{G}(\hat{\textit{B}}(\lambda))\right)}{\partial \rm{vec}^{\top}\left(\hat{\textit{B}}(\lambda)\right)}+\lambda \left(\textit{W}_{2}\bigotimes \textit{W}_{1}\right)\frac{\partial \rm{vec}(N)}{\partial \rm{vec}^{\top}\left(\hat{\textit{B}}(\lambda)\right)}\right]\frac{\partial \rm{vec}\left(\hat{\textit{B}}(\lambda)\right)}{\partial y_{k}}\\
&:=\left[\frac{1}{\textit{n}}\sum\limits_{\textit{i}=1}^{\textit{n}}
\rm{vec}\left(\textit{X}_{\textit{i}}\right)\rm{vec}\left(\textit{X}_{\textit{i}}\right)^{\top}+\lambda \textit{M}^{(1)}_{\textit{r}}+\textit{M}^{(2)}_{\textit{r}}\right] \frac{\partial \rm{vec}\left(\hat{\textit{B}}(\lambda)\right)}{\partial y_{k}},
\end{aligned}
\end{equation}
where
\begin{center}
$M^{(1)}_{r}=\left(\textit{W}_{2}\bigotimes \textit{W}_{1}\right)\frac{\partial \rm{vec}\left(\textit{G}\left(\hat{\textit{B}}(\lambda)\right)\right)}{\partial \rm{vec}^{\top}\left(\hat{\textit{B}}(\lambda)\right)}$,
  $M^{(2)}_{r}=\lambda\left(\textit{W}_{2}\bigotimes \textit{W}_{1}\right)\frac{\partial \rm{vec}(\textit{N})}{\partial \rm{vec}^{\top}\left(\hat{\textit{B}}(\lambda)\right)}$.
\end{center}

Denote $M_{r}=\frac{1}{\textit{n}}\sum\limits_{\textit{i}=1}^{\textit{n}}
\rm{vec}\left(\textit{X}_{\textit{i}}\right)\rm{vec}\left(\textit{X}_{\textit{i}}\right)^{\top}
+\lambda \textit{M}^{(1)}_{\textit{r}}+\textit{M}^{(2)}_{\textit{r}}$. Then, the equation (\ref{eqnew}) is transformed as
\begin{eqnarray}{\label{eqnewtr}}
\frac{1}{n}\rm{vec}\left(\textit{X}_{\textit{k}}\right)=\textit{M}_{\textit{r}}\frac{\partial \rm{vec}\left(\hat{\textit{B}}(\lambda)\right)}{\partial y_{k}}.
\end{eqnarray}

To get the result of $\hat{\rm{df}}_{\lambda}$, we assume that $M_{r}$ has full rank or it is a symmetric matrix. The results are illustrated as follows.

If $M_{r}$ has full rank, it is obvious that $\frac{\partial \rm{vec}\left(\hat{\textit{B}}(\lambda)\right)}{\partial y_{k}}=\frac{1}{n} M^{-1}_{r}\rm{vec}\left(\textit{X}_{\textit{k}}\right)$ and
$\hat{\rm{df}}_{\lambda}=\frac{1}{n}\sum\limits_{\textit{k}=1}^{\textit{n}}
\rm{vec}\left(\textit{X}_{\textit{k}}\right)^{\top}\textit{M}^{-1}_{\textit{r}}
\rm{vec}\left(\textit{X}_{\textit{k}}\right)$.

If $M_{r}$ is symmetric, suppose the eigenvalue decomposition of $M_{r}$ is
\begin{center}
$M_{r}=V\begin{pmatrix}\rm{Diag}\left(\sigma\left(\textit{M}_{\textit{r}}\right)\right),&0\\0,&0\end{pmatrix}V^{\top}$,
\end{center}
where $V=\left(\textbf{v}_{1},\cdots,\textbf{v}_{r_{M}},\textbf{v}_{r_{M}+1},
\cdots,\textbf{v}_{p_{1}p_{2}}\right)=\left(V_{r_{M}},V_{r_{M}^{\bot}}\right)\in\mathbb{O}^{p_{1}p_{2}}$, $V^{\top}_{r_{M}^{\bot}}V_{r_{M}}=0$ and $\sigma\left(M_{r}\right)\in\mathbb{R}^{r_{M}}$ with $r_{M}$ being the rank of $M_{r}$. Then, the pseudo inverse matrix of $M_{r}$ is
\begin{center}
$M^{+}_{r}=V\begin{pmatrix}\rm{Diag}\left(\sigma\left(\textit{M}_{\textit{r}}\right)\right)^{-1},&0\\0,&0\end{pmatrix}V^{\top}$.
\end{center}
Based on the equation (\ref{eqnewtr}),  we have
\begin{equation}\label{eqimp}
\begin{aligned}
\frac{1}{n}\textit{V}_{\textit{r}_{\textit{M}}^{\bot}}^{\top}\rm{vec}\left(\textit{X}_{\textit{k}}\right)
&=\textit{V}_{\textit{r}_{\textit{M}}^{\bot}}^{\top}\textit{M}_{\textit{r}}\frac{\partial \rm{vec}\left(\hat{\textit{B}}(\lambda)\right)}{\partial y_{k}}
=\textit{V}_{\textit{r}_{\textit{M}}^{\bot}}^{\top}V
\begin{pmatrix}\rm{Diag}\left(\sigma\left(\textit{M}_{\textit{r}}\right)\right),&0\\0,&0\end{pmatrix}
V^{\top}\frac{\partial \rm{vec}\left(\hat{\textit{B}}(\lambda)\right)}{\partial y_{k}}\\
%&=\textit{V}_{\textit{r}_{\textit{M}}^{\bot}}^{\top}V\begin{pmatrix}\rm{Diag}(\sigma(\textit{M}_{\textit{r}})),&0\\0,&0\end{pmatrix}
%\begin{pmatrix}V_{r_{M}}^{\top}\frac{\partial \rm{vec}(\hat{\textit{B}}(\lambda))}{\partial y_{k}}\\
%V_{r_{M}^{\bot}}^{\top}\frac{\partial \rm{vec}(\hat{\textit{B}}(\lambda))}{\partial y_{k}}\end{pmatrix}\\
&=\textit{V}_{\textit{r}_{\textit{M}}^{\bot}}^{\top}V\begin{pmatrix}\rm{Diag}\left(\sigma\left(\textit{M}_{\textit{r}}\right)\right)\textit{V}_{\textit{r}_{\textit{M}}}^{\top}
\frac{\partial \rm{vec}\left(\hat{\textit{B}}(\lambda)\right)}{\partial y_{k}}\\
0\end{pmatrix}=\begin{pmatrix}\textit{V}_{\textit{r}_{\textit{M}}^{\bot}}^{\top}\textit{V}_{\textit{r}_{\textit{M}}},
\textit{V}_{\textit{r}_{\textit{M}}^{\bot}}^{\top}\textit{V}_{\textit{r}_{\textit{M}}^{\bot}}
\end{pmatrix}
\begin{pmatrix}\rm{Diag}\left(\sigma\left(\textit{M}_{\textit{r}}\right)\right)\textit{V}_{\textit{r}_{\textit{M}}}^{\top}\frac{\partial \rm{vec}\left(\hat{\textit{B}}(\lambda)\right)}{\partial y_{k}}\\
0\end{pmatrix}\\
&=\begin{pmatrix}0,I_{p_{1}p_{2}-r_{M}}\end{pmatrix}
\begin{pmatrix}\rm{Diag}\left(\sigma\left(\textit{M}_{\textit{r}}\right)\right)\textit{V}_{\textit{r}_{\textit{M}}}^{\top}\frac{\partial \rm{vec}\left(\hat{\textit{B}}(\lambda)\right)}{\partial y_{k}}\\
0\end{pmatrix}=0.
\end{aligned}
\end{equation}
Combining the fact that $M^{+}_{r}M_{r}=V_{r_{M}}V_{r_{M}}^{\top}$ and (\ref{eqimp}), we have
\begin{equation*}
\begin{aligned}
\hat{\rm{df}}_{\lambda}&=\sum\limits_{\textit{k}=1}^{\textit{n}}
\left\langle\rm{vec}\left(\textit{X}_{\textit{k}}\right),\frac{\partial \rm{vec}\left(\hat{\textit{B}}(\lambda)\right)}{\partial y_{k}}\right\rangle\\
&=\sum\limits_{\textit{k}=1}^{\textit{n}}\rm{vec}\left(\textit{X}_{\textit{k}}\right)^{\top}
\textit{V}\textit{V}^{\top}\frac{\partial \rm{vec}\left(\hat{\textit{B}}(\lambda)\right)}{\partial y_{k}}=\sum\limits_{\textit{k}=1}^{\textit{n}}\rm{vec}\left(\textit{X}_{\textit{k}}\right)^{\top}\textit{M}^{+}_{\textit{r}}\textit{M}_{\textit{r}}\frac{\partial \rm{vec}\left(\hat{\textit{B}}(\lambda)\right)}{\partial y_{k}}
+\sum\limits_{\textit{k}=1}^{\textit{n}}\rm{vec}\left(\textit{X}_{\textit{k}}\right)^{\top}
\textit{V}_{\textit{r}_{\textit{M}}^{\bot}}\textit{V}_{\textit{r}_{\textit{M}}^{\bot}}^{\top}
\frac{\partial \rm{vec}\left(\hat{\textit{B}}(\lambda)\right)}{\partial y_{k}}\\
&\underset{(\ref{eqnewtr})}{=}\frac{1}{n}\sum\limits_{\textit{k}=1}^{\textit{n}}\rm{vec}
\left(\textit{X}_{\textit{k}}\right)^{\top}\textit{M}^{+}_{\textit{r}}
\rm{vec}\left(\textit{X}_{\textit{k}}\right)
+\sum\limits_{\textit{k}=1}^{\textit{n}}\rm{vec}\left(\textit{X}_{\textit{k}}\right)^{\top}
\textit{V}_{\textit{r}_{\textit{M}}^{\bot}}\textit{V}_{\textit{r}_{\textit{M}}^{\bot}}^{\top}\frac{\partial \rm{vec}\left(\hat{\textit{B}}(\lambda)\right)}{\partial y_{k}}\\
&=\frac{1}{n}\sum\limits_{\textit{k}=1}^{\textit{n}}\rm{vec}\left(\textit{X}_{\textit{k}}\right)^{\top}\textit{M}^{+}_{\textit{r}}
\rm{vec}\left(\textit{X}_{\textit{k}}\right)
+
\sum\limits_{\textit{k}=1}^{\textit{n}}\left\langle\textit{V}_{\textit{r}_{\textit{M}}^{\bot}}^{\top}\rm{vec}\left(\textit{X}_{\textit{k}}\right),
\textit{V}_{\textit{r}_{\textit{M}}^{\bot}}^{\top}\frac{\partial \rm{vec}\left(\hat{\textit{B}}(\lambda)\right)}{\partial y_{k}}\right\rangle\\
&\underset{(\ref{eqimp})}{=}\frac{1}{n}\sum\limits_{\textit{k}=1}^{\textit{n}}
\rm{vec}\left(\textit{X}_{\textit{k}}\right)^{\top}\textit{M}^{+}_{\textit{r}}
\rm{vec}\left(\textit{X}_{\textit{k}}\right).
\end{aligned}
\end{equation*}
These two results leads to Theorem \ref{lemma df}.
\end{proof}
Next, we give detailed results of $M^{(1)}_{r}$ and  $M^{(2)}_{r}$ in following lemmas.
\begin{Lemma}\label{lemma3.1}
For any $\lambda\geq0$, let $r$ be the rank of $W_{1}\hat{B}(\lambda)W_{2}$. Suppose the singular value decomposition of $W_{1}\hat{B}(\lambda)W_{2}$ is $W_{1}\hat{B}(\lambda)W_{2}=U_{r}\rm{Diag}\left(\textit{b}_{\textit{r}}\right)\textit{V}_{\textit{r}}^{\top}$ with $U_{r}\in\mathbb{R}^{p_{1}\times r}$, $V_{r}\in\mathbb{R}^{p_{2}\times r}$ and $b_{r}\in\mathbb{R}^{r}_{++}$. Then,
\begin{equation}\label{eqn}
\begin{aligned}
M^{(1)}_{r}
&=\left(W_{2}\bigotimes W_{1}\right)\frac{\partial \rm{vec}\left(\textit{G}(\hat{\textit{B}}(\lambda))\right)}{\partial \rm{vec}^{\top}\left(\hat{\textit{B}}(\lambda)\right)}=
\left[W_{2}V_{r}\rm{Diag}\left(\textit{b}_{\textit{r}}\right)^{-1}\textit{V}^{\top}_{\textit{r}}\textit{W}_{2}\right]\bigotimes W^{2}_{1}+W^{2}_{2}\bigotimes\left[W_{1}U_{r}\rm{Diag}\left(\textit{b}_{\textit{r}}\right)^{-1}\textit{U}^{\top}_{\textit{r}}\textit{W}_{1}\right].
\end{aligned}
\end{equation}
\end{Lemma}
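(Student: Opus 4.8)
The plan is to compute the Jacobian $\partial\,\mathrm{vec}(G(\hat B(\lambda)))/\partial\,\mathrm{vec}^\top(\hat B(\lambda))$ in closed form and then left-multiply by $W_2\otimes W_1$, using only the Kronecker calculus collected in Proposition \ref{pro2.2}. The starting point is to rewrite $G=U_rV_r^\top$ in two equivalent ways that isolate the dependence on $\hat B(\lambda)$ inside a single explicit factor. From the compact singular value decomposition $W_1\hat B(\lambda)W_2=U_r\mathrm{Diag}(b_r)V_r^\top$ together with $U_r^\top U_r=V_r^\top V_r=I_r$ one checks directly that
\[
G=\bigl(W_1\hat B(\lambda)W_2\bigr)V_r\,\mathrm{Diag}(b_r)^{-1}V_r^\top
=U_r\,\mathrm{Diag}(b_r)^{-1}U_r^\top\bigl(W_1\hat B(\lambda)W_2\bigr),
\]
since $V_r^\top V_r=I_r$ collapses the right-hand factor (and symmetrically $U_r^\top U_r=I_r$ collapses the left-hand factor).

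Next I would differentiate. Treating the singular subspaces $U_r,V_r$ and the values $b_r$ as locally frozen and letting only the explicit middle factor vary in each representation, the differential of $G$ as a function of $\hat B(\lambda)$ becomes the sum of the two frozen-factor contributions,
\[
\mathrm{d}G=W_1\,\mathrm{d}\hat B(\lambda)\,W_2\,V_r\,\mathrm{Diag}(b_r)^{-1}V_r^\top
+U_r\,\mathrm{Diag}(b_r)^{-1}U_r^\top W_1\,\mathrm{d}\hat B(\lambda)\,W_2 .
\]
Each summand has the form $P\,\mathrm{d}\hat B(\lambda)\,Q$, so Proposition \ref{pro2.2}(i) (equivalently (iv)) gives its Jacobian as $Q^\top\otimes P$. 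Using the symmetry $W_1^\top=W_1$, $W_2^\top=W_2$ and that $\mathrm{Diag}(b_r)^{-1}$ is symmetric, the first summand contributes $\bigl(V_r\mathrm{Diag}(b_r)^{-1}V_r^\top W_2\bigr)\otimes W_1$ and the second contributes $W_2\otimes\bigl(U_r\mathrm{Diag}(b_r)^{-1}U_r^\top W_1\bigr)$ to $\partial\,\mathrm{vec}(G)/\partial\,\mathrm{vec}^\top(\hat B(\lambda))$.

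Finally I would left-multiply by $W_2\otimes W_1$ and apply the mixed-product rule $(A\otimes B)(C\otimes D)=AC\otimes BD$ of Proposition \ref{pro2.2}(iii) to each Kronecker block. This turns the first contribution into $\bigl(W_2V_r\mathrm{Diag}(b_r)^{-1}V_r^\top W_2\bigr)\otimes W_1^2$ and the second into $W_2^2\otimes\bigl(W_1U_r\mathrm{Diag}(b_r)^{-1}U_r^\top W_1\bigr)$, whose sum is exactly the asserted expression (\ref{eqn}).

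The step I expect to be the main obstacle is the differentiation itself, that is, justifying that $\mathrm{d}G$ reduces to precisely the two frozen-factor terms above. Because $U_r$, $V_r$ and $b_r$ are themselves functions of $\hat B(\lambda)$, a full product-rule expansion of either representation also produces terms carrying $\mathrm{d}U_r$, $\mathrm{d}V_r$ and $\mathrm{d}\,\mathrm{Diag}(b_r)$; all of these live in the $U_r(\cdot)V_r^\top$ block and must be shown not to survive in the final Jacobian. The hypothesis that $r$ is exactly the rank of $W_1\hat B(\lambda)W_2$, i.e. the entries of $b_r$ are positive and separated from the vanishing singular values, supplies the spectral gap that makes the top-$r$ singular subspace a smooth function of $\hat B(\lambda)$, so a perturbation/implicit-differentiation argument for the singular value decomposition is available; carrying it out and verifying that the subspace-variation terms do not contribute is the delicate part of the proof, and is where I would concentrate the rigor.
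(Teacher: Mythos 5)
Your proposal is correct and takes essentially the same route as the paper: the paper likewise writes $U_{r}=W_{1}\hat{B}(\lambda)W_{2}V_{r}\mathrm{Diag}(b_{r})^{-1}$ and $V_{r}^{\top}=\mathrm{Diag}(b_{r})^{-1}U_{r}^{\top}W_{1}\hat{B}(\lambda)W_{2}$, freezes the singular factors when differentiating, and applies the same Kronecker identities from Proposition \ref{pro2.2} to reach (\ref{eqn}). The ``delicate part'' you flag --- accounting for the terms carrying $\mathrm{d}U_{r}$, $\mathrm{d}V_{r}$ and $\mathrm{d}\,\mathrm{Diag}(b_{r})$ --- is not addressed in the paper either; its proof is precisely the frozen-factor computation you describe, so your version is no less rigorous than the original.
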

\begin{proof}
Note that $M^{(1)}_{r}=\left(\textit{W}_{2}\bigotimes \textit{W}_{1}\right)\frac{\partial \rm{vec}\left(\textit{G}\left(\hat{\textit{B}}(\lambda)\right)\right)}{\partial \rm{vec}^{\top}\left(\hat{\textit{B}}(\lambda)\right)}$. It is obvious that
\begin{eqnarray}\label{eqs}
\frac{\partial \rm{vec}\left(\textit{G}\left(\hat{\textit{B}}(\lambda)\right)\right)}{\partial \rm{vec}^{\top}\left(\hat{\textit{B}}(\lambda)\right)}
= \frac{\partial \rm{vec}\left(\textit{G}\left(\hat{\textit{B}}(\lambda)\right)\right)}{\partial \rm{vec}^{\top}\left(\textit{U}_{\textit{r}}\right)}\cdot\frac{\partial \rm{vec}\left(\textit{U}_{\textit{r}}\right)}{\partial \rm{vec}^{\top}\left(\hat{\textit{B}}(\lambda)\right)}
+\frac{\partial \rm{vec}\left(\textit{G}\left(\hat{\textit{B}}(\lambda)\right)\right)}{\partial \rm{vec}^{\top}\left(\textit{V}^{\top}_{\textit{r}}\right)}\cdot \frac{\partial \rm{vec}\left(\textit{V}^{\top}_{\textit{r}}\right)}{\partial \rm{vec}^{\top}\left(\hat{\textit{B}}(\lambda)\right)}.
\end{eqnarray}
(i). According to the fact that $i)$ in Proposition \ref{pro2.2}, the following results hold.
\begin{align*}
&\frac{\partial \rm{vec}\left(\textit{G}\left(\hat{\textit{B}}(\lambda)\right)\right)}{\partial \rm{vec}^{\top}\left(\textit{U}_{\textit{r}}\right)}
= \frac{\partial \rm{vec}\left(\textit{I}_{\textit{p}_{1}}\textit{U}_{\textit{r}}
\textit{V}^{\top}_{\textit{r}}\right)}
{\partial \rm{vec}^{\top}\left(\textit{U}_{\textit{r}}\right)}= \frac{\partial \left[\left(V_{r}\bigotimes I_{p_{1}}\right)
\rm{vec}\left(\textit{U}_{\textit{r}}\right)\right]}{\partial \rm{vec}^{\top}\left(\textit{U}_{\textit{r}}\right)}
= V_{r}\bigotimes I_{p_{1}}.\\
&\frac{\partial \rm{vec}\left(\textit{G}\left(\hat{\textit{B}}(\lambda)\right)\right)}{\partial \rm{vec}^{\top}\left(\textit{V}^{\top}_{\textit{r}}\right)}
= \frac{\partial \rm{vec}\left(\textit{U}_{\textit{r}}\textit{V}^{\top}_{\textit{r}}\textit{I}_{\textit{p}_{2}}\right)}
{\partial \rm{vec}^{\top}\left(\textit{V}^{\top}_{\textit{r}}\right)}= \frac{\partial \left[\left(I_{p_{2}}\bigotimes U_{r}\right) \rm{vec}\left(\textit{V}^{\top}_{\textit{r}}\right)\right]}{\partial \rm{vec}^{\top}\left(\textit{V}^{\top}_{\textit{r}}\right)}
= I_{p_{2}}\bigotimes U_{r}.
\end{align*}
(ii). According to the fact that $U_{r}$ and $V_{r}$ have full column rank, there are $U^{\top}_{r}U_{r}=I_{r}$ and $V^{\top}_{r}V_{r}=I_{r}$. Combining these facts with $W_{1}\hat{B}(\lambda)W_{2}=U_{r}\rm{Diag}\left(\textit{b}_{\textit{r}}\right)\textit{V}^{\top}_{\textit{r}}$, we have
\begin{align*}
&U_{r}=W_{1}\hat{B}(\lambda) W_{2}V_{r}\rm{Diag}\left(\textit{b}_{\textit{r}}\right)^{-1},\\
&V^{\top}_{r}=\rm{Diag}\left(\textit{b}_{\textit{r}}\right)^{-1}\textit{U}^{\top}_{\textit{r}}\textit{W}_{1}\hat{\textit{B}}(\lambda) \textit{W}_{2},
\end{align*}
which leads to
\begin{align*}
&\rm{vec}\left(\textit{U}_{\textit{r}}\right)= \left[\left(\rm{Diag}\left(\textit{b}_{\textit{r}}\right)^{-1}
\textit{V}^{\top}_{\textit{r}}\textit{W}_{2}\right)\bigotimes \textit{W}_{1}\right]\rm{vec}\left(\hat{\textit{B}}(\lambda)\right) ,\\
&\rm{vec}\left(\textit{V}^{\top}_{\textit{r}}\right)= \left[\textit{W}_{2}\bigotimes \left(\rm{Diag}\left(\textit{b}_{\textit{r}}\right)^{-1}
\textit{U}^{\top}_{\textit{r}}\textit{W}_{1}\right)\right]
\rm{vec}\left(\hat{\textit{B}}(\lambda)\right).
\end{align*}
Thus,
\begin{align*}
\frac{\partial \rm{vec}\left(\textit{U}_{\textit{r}}\right)}{\partial \rm{vec}^{\top}\left(\hat{\textit{B}}(\lambda)\right)}
=\left[\rm{Diag}\left(\textit{b}_{\textit{r}}\right)^{-1}\textit{V}^{\top}_{\textit{r}}\textit{W}_{2}\right]\bigotimes W_{1}, \frac{\partial \rm{vec}\left(\textit{V}^{\top}_{\textit{r}}\right)}{\partial \rm{vec}^{\top}\left(\hat{\textit{B}}(\lambda)\right)}=W_{2}\bigotimes \left[\rm{Diag}\left(\textit{b}_{\textit{r}}\right)^{-1}\textit{U}^{\top}_{\textit{r}}\textit{W}_{1}\right].
\end{align*}
Combining results in (i) and (ii), we know that
\begin{eqnarray*}
M^{(1)}_{r}=\left(W_{2}\bigotimes W_{1}\right)\frac{\partial \rm{vec}\left(\textit{G}\left(\hat{\textit{B}}(\lambda)\right)\right)}{\partial \rm{vec}^{\top}\left(\hat{\textit{B}}(\lambda)\right)}=
\left[W_{2}V_{r}\rm{Diag}
\left(\textit{b}_{\textit{r}}\right)^{-1}\textit{V}^{\top}_{\textit{r}}\textit{W}_{2}\right]\bigotimes W^{2}_{1}+W^{2}_{2}\bigotimes\left[W_{1}U_{r}\rm{Diag}\left(\textit{b}_{\textit{r}}\right)^{-1}
\textit{U}^{\top}_{\textit{r}}\textit{W}_{1}\right].
\end{eqnarray*}
\end{proof}
It is obvious that $W_{2}V_{r}$Diag$\left(b_{r}\right)^{-1}V^{\top}_{r}W_{2}\succeq0$, $W^{2}_{1}\succ0$, $W^{2}_{2}\succ0$ and $W_{1}U_{r}$Diag$\left(b_{r}\right)^{-1}U^{\top}_{r}W_{1}\succeq0$. Due to \cite[Corollary 2.1]{magnus2019matrix}, we have $(W_{2}V_{r}$Diag$\left(b_{r}\right)^{-1}V^{\top}_{r}W_{2})\bigotimes W^{2}_{1}\succeq0$ and $W^{2}_{2}\bigotimes
(W_{1}U_{r}$Diag$\left(b_{r}\right)^{-1}U^{\top}_{r}W_{1})\succeq0$, which leads to the matrix $M^{(1)}_{r}$ is positive semidefinite.

Now, we calculate the result of $M^{(2)}_{r}$, which is decided by $\frac{\partial \rm{vec}\left(\textit{N}\right)}{\partial \rm{vec}^{\top}\left(\hat{B}(\lambda)\right)}$.
\begin{Lemma}\label{lemma3.2}
For any $\lambda\geq0$, let $r$ be the rank of $W_{1}\hat{B}(\lambda)W_{2}$. Suppose the singular value decomposition of $W_{1}\hat{B}(\lambda)W_{2}$ is $W_{1}\hat{B}(\lambda)W_{2}=U_{r}\rm{Diag}\left(\textit{b}_{\textit{r}}\right)\textit{V}_{\textit{r}}^{\top}$ with $U_{r}\in\mathbb{R}^{p_{1}\times r}$, $V_{r}\in\mathbb{R}^{p_{2}\times r}$ and $b_{r}\in\mathbb{R}^{r}_{++}$.  Then,
\begin{equation}\label{eqnn}
\begin{aligned}
M^{(2)}_{r}=&-\left[\textit{I}_{\textit{p}_{2}}\bigotimes \left(W_{1}\left(\textit{I}_{\textit{p}_{1}}-\textit{U}_{\textit{r}}\textit{U}^{\top}_{\textit{r}}\right)W^{-1}_{1}\right)\right]
\cdot\frac{\mathcal{X}^{\top}\mathcal{X}}{n}\cdot
\left[\left(
\textit{W}^{-1}_{2}\textit{V}_{\textit{r}}\textit{V}^{\top}_{\textit{r}}\textit{W}_{2}\right)
\bigotimes I_{p_{1}}\right]\\
&-\left[\left(W_{2}\left(\textit{I}_{\textit{p}_{2}}-\textit{V}_{\textit{r}}\textit{V}^{\top}_{\textit{r}}\right)W^{-1}_{2}\right)\bigotimes
\textit{I}_{\textit{p}_{1}}\right]\cdot\frac{\mathcal{X}^{\top}\mathcal{X}}{n}\cdot \left[I_{p_{2}}\bigotimes\left(\textit{W}^{-1}_{1}\textit{U}_{\textit{r}}\textit{U}^{\top}_{\textit{r}}\textit{W}_{1}\right)\right]\\
&-\left[\left(W_{2}\left(\textit{I}_{\textit{p}_{2}}-\textit{V}_{\textit{r}}\textit{V}^{\top}_{\textit{r}}\right)W^{-1}_{2}\right)\bigotimes
\textit{I}_{\textit{p}_{1}}\right]\cdot\frac{\mathcal{X}^{\top}\mathcal{X}}{n}\cdot
\left[\left(\textit{W}^{-1}_{2}\textit{V}_{\textit{r}}\textit{V}^{\top}_{\textit{r}}\textit{W}_{2}\right)\bigotimes
\left(\textit{W}^{-1}_{1}\textit{U}_{\textit{r}}\textit{U}^{\top}_{\textit{r}}\textit{W}_{1}\right)\right]\\
&-\lambda\left[\left(\textit{E}^{\top}_{\lambda}\textit{W}^{-1}_{1}\right)
\bigotimes W_{1}\right]
\left(I_{p^{2}_{1}}+K_{p^{2}_{1}}\right)
\left[\left(\left((W_{1}\hat{B}(\lambda)W_{2})^{+}\right)^{\top}\textit{W}_{2}\right)\bigotimes \textit{W}_{1}\right]\\
&-\lambda\left[W_{2}\bigotimes \left(E_{\lambda}W^{-1}_{2}\right)\right]
\left(I_{p^{2}_{2}}+K_{p^{2}_{2}}\right)
\left[\textit{W}_{2}\bigotimes \left(\left(W_{1}\hat{B}(\lambda)W_{2}\right)^{+}\textit{W}_{1}\right)\right],
\end{aligned}
\end{equation}
where $E_{\lambda}=\frac{1}{n}\sum\limits_{i=1}^{n}\left(y_{i}-\left\langle X_{i},\hat{B}(\lambda)\right\rangle\right)X_{i}$ and $\left(W_{1}\hat{B}(\lambda)W_{2}\right)^{+}=\textit{V}_{\textit{r}}\rm{Diag}
\left(\textit{b}_{\textit{r}}\right)^{-1}\textit{U}^{\top}_{\textit{r}}$.
\end{Lemma}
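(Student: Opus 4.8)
The plan is to produce $N$ explicitly as a function of $\hat{B}(\lambda)$ and then differentiate. First I would isolate $N$ from the optimality condition \eqref{eq}: the identity $-E_{\lambda}+\lambda W_{1}\left(U_{r}V_{r}^{\top}+N\right)W_{2}=0$ gives $W_{1}^{-1}E_{\lambda}W_{2}^{-1}=\lambda\left(U_{r}V_{r}^{\top}+N\right)$, and the subdifferential constraints from Proposition \ref{pro2.1} force $U_{r}^{\top}N=0$ and $NV_{r}=0$. Projecting onto the complementary subspaces (using $\left(I_{p_{1}}-U_{r}U_{r}^{\top}\right)U_{r}=0$ and $V_{r}^{\top}\left(I_{p_{2}}-V_{r}V_{r}^{\top}\right)=0$) therefore yields the closed form
\[
\lambda N=\left(I_{p_{1}}-U_{r}U_{r}^{\top}\right)W_{1}^{-1}E_{\lambda}W_{2}^{-1}\left(I_{p_{2}}-V_{r}V_{r}^{\top}\right),
\]
in which the two orthogonal projectors are $U_{r}U_{r}^{\top}=\left(W_{1}\hat{B}(\lambda)W_{2}\right)\left(W_{1}\hat{B}(\lambda)W_{2}\right)^{+}$ and $V_{r}V_{r}^{\top}=\left(W_{1}\hat{B}(\lambda)W_{2}\right)^{+}\left(W_{1}\hat{B}(\lambda)W_{2}\right)$, with $\left(W_{1}\hat{B}(\lambda)W_{2}\right)^{+}=V_{r}\mathrm{Diag}\left(b_{r}\right)^{-1}U_{r}^{\top}$ as stated.

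Having this representation, I would compute $\frac{\partial\mathrm{vec}(N)}{\partial\mathrm{vec}^{\top}(\hat{B}(\lambda))}$ by the product rule, separating the three ways the right-hand side depends on $\hat{B}(\lambda)$: through the left projector $U_{r}U_{r}^{\top}$, through the residual $E_{\lambda}$ in the middle, and through the right projector $V_{r}V_{r}^{\top}$. Each resulting matrix product is vectorized with i) and iv) of Proposition \ref{pro2.2}, turning a factor $A(\cdot)B$ into $\left(B^{\top}\otimes A\right)\mathrm{vec}(\cdot)$, and the whole expression is assembled after premultiplying by $\lambda\left(W_{2}\otimes W_{1}\right)$, which is the definition of $M_{r}^{(2)}$ arising in the proof of Theorem \ref{lemma df}.

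Differentiating $E_{\lambda}=\frac{1}{n}\sum_{i}\left(y_{i}-\left\langle X_{i},\hat{B}(\lambda)\right\rangle\right)X_{i}$ is the routine branch: since $\frac{\partial\mathrm{vec}(E_{\lambda})}{\partial\mathrm{vec}^{\top}(\hat{B}(\lambda))}=-\frac{1}{n}\sum_{i}\mathrm{vec}(X_{i})\mathrm{vec}(X_{i})^{\top}=-\frac{\mathcal{X}^{\top}\mathcal{X}}{n}$, this branch produces the terms carrying $\frac{\mathcal{X}^{\top}\mathcal{X}}{n}$ once the flanking projectors and weights are pushed through the Kronecker algebra of iii) in Proposition \ref{pro2.2}. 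The remaining terms, those carrying an explicit $\lambda$ together with the commutation matrices $K_{p_{1}^{2}}$ and $K_{p_{2}^{2}}$, come from differentiating the two projectors. For these I would use the relations $U_{r}=W_{1}\hat{B}(\lambda)W_{2}V_{r}\mathrm{Diag}\left(b_{r}\right)^{-1}$ and $V_{r}^{\top}=\mathrm{Diag}\left(b_{r}\right)^{-1}U_{r}^{\top}W_{1}\hat{B}(\lambda)W_{2}$ already exploited in Lemma \ref{lemma3.1}, together with the pseudoinverse above; because each projector is a symmetric product of a factor with its transpose, its derivative is a matrix plus its transpose, and vectorizing the transpose via ii) of Proposition \ref{pro2.2} is exactly what generates the symmetrizers $I_{p_{1}^{2}}+K_{p_{1}^{2}}$ and $I_{p_{2}^{2}}+K_{p_{2}^{2}}$, while the surrounding residual survives as the factors $E_{\lambda}^{\top}W_{1}^{-1}$ and $E_{\lambda}W_{2}^{-1}$.

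The hard part will be this second branch: differentiating the singular-subspace projectors correctly. The singular vectors $U_{r},V_{r}$ are determined only up to rotations within the singular subspaces, so their individual derivatives are not canonical; only the projectors are, and their derivatives require the constant-rank situation (the rank of $W_{1}\hat{B}(\lambda)W_{2}$ being locally fixed) and the differentiation rule for the pseudoinverse. The real labor is the bookkeeping: tracking which weight $W_{i}$ or $W_{i}^{-1}$ conjugates each projector and on which side, and then collapsing the many Kronecker terms generated by i)--iv) of Proposition \ref{pro2.2} into the compact five-term expression \eqref{eqnn}. The simplifications that reduce these to the stated form rely on the orthonormality identities $U_{r}^{\top}U_{r}=I_{r}$, $V_{r}^{\top}V_{r}=I_{r}$ and on the constraints $U_{r}^{\top}N=0$, $NV_{r}=0$, which make several cross terms vanish.
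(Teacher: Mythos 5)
Your closed form for $N$ is sound and matches the paper's first step: the paper likewise isolates $N$ from (\ref{eq}), uses $U_r^\top N=0$ and $NV_r=0$ to write $N=\left(I_{p_1}-U_rU_r^\top\right)\tilde{N}=\tilde{N}\left(I_{p_2}-V_rV_r^\top\right)$ with $\tilde{N}=\frac{1}{\lambda}W_1^{-1}E_\lambda W_2^{-1}$, which is equivalent to your doubly projected expression (the paper additionally verifies $\|N\|_2\le 1$, a check you omit). The genuine gap is in the differentiation step, and it is not mere bookkeeping. You propose to treat the dependence of the middle factor on $\hat{B}(\lambda)$ directly, via $\partial\mathrm{vec}(E_\lambda)/\partial\mathrm{vec}^\top(\hat{B}(\lambda))=-\mathcal{X}^\top\mathcal{X}/n$. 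After premultiplying by $\lambda\left(W_2\otimes W_1\right)$, that branch produces exactly \emph{one} term,
\begin{equation*}
-\left[\left(W_2\left(I_{p_2}-V_rV_r^\top\right)W_2^{-1}\right)\otimes\left(W_1\left(I_{p_1}-U_rU_r^\top\right)W_1^{-1}\right)\right]\cdot\frac{\mathcal{X}^\top\mathcal{X}}{n},
\end{equation*}
with nothing multiplying $\mathcal{X}^\top\mathcal{X}$ on the right. The lemma, however, asserts \emph{three} such terms, each carrying a nontrivial right factor $\left(W_2^{-1}V_rV_r^\top W_2\right)\otimes I_{p_1}$, $I_{p_2}\otimes\left(W_1^{-1}U_rU_r^\top W_1\right)$, or their product. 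In the paper those right factors arise precisely because the dependence of $\tilde{N}$ on $\hat{B}(\lambda)$ is \emph{not} differentiated directly: the proof splits $\partial\mathrm{vec}(N)/\partial\mathrm{vec}^\top(\hat{B}(\lambda))$ by the chain rule through the three SVD factors $U_r$, $V_r^\top$, $\mathrm{Diag}(b_r)$, and inside each branch it substitutes $\mathrm{vec}(\hat{B}(\lambda))=\left[\left(W_2^{-1}V_r\mathrm{Diag}(b_r)\right)\otimes W_1^{-1}\right]\mathrm{vec}(U_r)$ (and the analogues for $V_r^\top$ and $\mathrm{Diag}(b_r)$) into $E_\lambda$ before differentiating with respect to that factor; composing with $\partial\mathrm{vec}(U_r)/\partial\mathrm{vec}^\top(\hat{B}(\lambda))$ from Lemma \ref{lemma3.1} is what sandwiches $\mathcal{X}^\top\mathcal{X}/n$ between Kronecker factors and yields three terms rather than one.

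These two outcomes are not algebraically equal, so no amount of simplification carries your expression into (\ref{eqnn}). Take $W_1=I_{p_1}$, $W_2=I_{p_2}$ and $\mathcal{X}^\top\mathcal{X}/n=I_{p_1p_2}$: your branch gives $-\left(I_{p_2}-V_rV_r^\top\right)\otimes\left(I_{p_1}-U_rU_r^\top\right)$, while the corresponding three terms of (\ref{eqnn}) sum to $-\left(V_rV_r^\top\right)\otimes\left(I_{p_1}-U_rU_r^\top\right)-\left(I_{p_2}-V_rV_r^\top\right)\otimes\left(U_rU_r^\top\right)$ (the third term vanishes since $\left(I_{p_2}-V_rV_r^\top\right)V_rV_r^\top=0$); these already differ for rank-one projectors. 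So, as a proof of the lemma as stated, your plan fails at the middle branch: carried out faithfully, it derives a different formula, and you offer no argument bridging the two. Your own caveat about differentiating non-canonical SVD factors is well placed — the discrepancy above is exactly a symptom of the fact that "holding the other factors fixed" is not an invariant operation — but the specific five-term form of (\ref{eqnn}) is an artifact of the paper's factor-wise routing of the $\hat{B}(\lambda)$-dependence, and reproducing it requires following that routing rather than the direct product rule you describe.
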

\begin{proof}
Note that $M^{(2)}_{r}=\lambda\left(\textit{W}_{2}\bigotimes \textit{W}_{1}\right)\frac{\partial \rm{vec}(\textit{N})}{\partial \rm{vec}^{\top}\left(\hat{\textit{B}}(\lambda)\right)}$. Based on the optimality condition (\ref{eq}), the matrix $N$ can be expressed as
$N=\frac{1}{n\lambda}W^{-1}_{1}\left(\sum\limits_{i=1}^{n}\left(y_{i}-\left\langle X_{i},\hat{B}(\lambda)\right\rangle\right)X_{i}\right)W^{-1}_{2}-U_{r}V^{\top}_{r}$. According to $U^{\top}_{r}N=0$ and $NV_{r}=0$, the following results hold.
\begin{center}
$\frac{1}{n\lambda}U^{\top}_{r}W^{-1}_{1}\left(\sum\limits_{i=1}^{n}\left(y_{i}-\left\langle X_{i},\hat{B}(\lambda)\right\rangle\right)X_{i}\right)W^{-1}_{2}=V^{\top}_{r}$ and
$\frac{1}{n\lambda}W^{-1}_{1}\left(\sum\limits_{i=1}^{n}\left(y_{i}-\left\langle X_{i},\hat{B}(\lambda)\right\rangle\right)X_{i}\right)W^{-1}_{2}V_{r}=U_{r}$.
\end{center}
Replacing these results into $N$, then $N$ can be expressed as
\begin{equation*}
\begin{aligned}
N&=\frac{1}{n\lambda}W^{-1}_{1}\sum\limits_{i=1}^{n}\left(y_{i}-\left\langle X_{i},\hat{B}(\lambda)\right\rangle\right)X_{i}W^{-1}_{2}&\\
&\quad\quad\quad\quad-\frac{1}{n^2\lambda^2}W^{-1}_{1}\sum\limits_{i=1}^{n}\left(y_{i}-\left\langle X_{i},\hat{B}(\lambda)\right\rangle\right)X_{i}W^{-1}_{2}V_{r}U^{\top}_{r}W^{-1}_{1}
\sum\limits_{i=1}^{n}\left(y_{i}-\left\langle X_{i},\hat{B}(\lambda)\right\rangle\right)X_{i}W^{-1}_{2}\\
&=\frac{1}{n\lambda}W^{-1}_{1}\left(\sum\limits_{i=1}^{n}\left(y_{i}-\left\langle X_{i},\hat{B}(\lambda)\right\rangle\right)X_{i}\right)W^{-1}_{2}
\left[I_{p_{2}}-\frac{1}{n\lambda}V_{r}U^{\top}_{r}W^{-1}_{1}\left(\sum\limits_{i=1}^{n}
\left(y_{i}-\left\langle X_{i},\hat{B}(\lambda)\right\rangle\right)X_{i}\right)W^{-1}_{2}\right]\\
&=\left[I_{p_{1}}-\frac{1}{n\lambda}W^{-1}_{1}
\left(\sum\limits_{i=1}^{n}\left(y_{i}-\left\langle X_{i},\hat{B}(\lambda)\right\rangle\right)X_{i}\right)W^{-1}_{2}V_{r}U^{\top}_{r}\right]
\frac{1}{n\lambda}W^{-1}_{1}\left(\sum\limits_{i=1}^{n}\left(y_{i}-\left\langle X_{i},\hat{B}(\lambda)\right\rangle\right)X_{i}\right)W^{-1}_{2}.
\end{aligned}
\end{equation*}
Let $\tilde{\textit{N}}=\frac{1}{n\lambda}W^{-1}_{1}\left(\sum\limits_{i=1}^{n}\left(y_{i}-\left\langle X_{i},\hat{B}(\lambda)\right\rangle\right)X_{i}\right)W^{-1}_{2}$. Then,
\begin{equation}\label{eqN}
\begin{aligned}
N&=\frac{1}{n\lambda}W^{-1}_{1}\left(\sum\limits_{i=1}^{n}\left(y_{i}-\left\langle X_{i},\hat{B}(\lambda)\right\rangle\right)X_{i}\right)W^{-1}_{2}
\left[I_{p_{2}}-V_{r}V^{\top}_{r}\right]=\tilde{\textit{N}}\left[I_{p_{2}}-V_{r}V^{\top}_{r}\right]\\
&=\frac{1}{n\lambda}\left[I_{p_{1}}-U_{r}U^{\top}_{r}\right]W^{-1}_{1}\left(\sum\limits_{i=1}^{n}\left(y_{i}-\left\langle X_{i},\hat{B}(\lambda)\right\rangle\right)X_{i}\right)W^{-1}_{2}
=\left[I_{p_{1}}-U_{r}U^{\top}_{r}\right]\tilde{\textit{N}}.
\end{aligned}
\end{equation}
To verify this $N$ satisfies the required conditions, there is only one left part needed to be proved, i.e, $\|N\|_{2}\leq 1$. According to the equation (\ref{eq}),
\begin{eqnarray*}
\tilde{\textit{N}}=\frac{1}{n\lambda}W^{-1}_{1}\left(\sum\limits_{i=1}^{n}\left(y_{i}-\left\langle X_{i},\hat{B}(\lambda)\right\rangle\right)X_{i}\right)W^{-1}_{2}=\partial \left\|\hat{B}(\lambda)\right\|_{*},
\end{eqnarray*}
which leads to $\left\|\tilde{\textit{N}}\right\|_{2}\leq 1$. Based on the singular inequalities in \cite{horn2012matrix},
\begin{eqnarray*}
\left\|\textit{N}\right\|_{2}
=\left\|\tilde{\textit{N}}\left[I_{p_{2}}-V_{r}V^{\top}_{r}\right]\right\|_{2}\leq
\left\|\tilde{\textit{N}}\right\|_{2}\cdot\left\|I_{p_{2}}-V_{r}V^{\top}_{r}\right\|_{2}\leq 1.
\end{eqnarray*}
So, the matrix $N$ in (\ref{eq}) can be expressed as in (\ref{eqN}).

Now, we calculate the result of $\frac{\partial \rm{vec}(\textit{N})}{\partial \rm{vec}^{\top}\left(\hat{B}(\lambda)\right)}$. It is sure that
\begin{eqnarray*}
\frac{\partial \rm{vec}(\textit{N})}{\partial \rm{vec}^{\top}\left(\hat{B}(\lambda)\right)}
=\frac{\partial \rm{vec}(\textit{N})}{\partial \rm{vec}^{\top}\left(\textit{U}_{\textit{r}}\right)}\cdot\frac{\partial \rm{vec}\left(\textit{U}_{\textit{r}}\right)}{\partial \rm{vec}^{\top}\left(\hat{B}(\lambda)\right)}+\frac{\partial \rm{vec}(\textit{N})}{\partial \rm{vec}^{\top}\left(\textit{V}^{\top}_{\textit{r}}\right)}\cdot\frac{\partial \rm{vec}\left(\textit{V}^{\top}_{\textit{r}}\right)}{\partial \rm{vec}^{\top}\left(\hat{B}(\lambda)\right)}+\frac{\partial \rm{vec}(\textit{N})}{\partial \rm{vec}^{\top}\left(\rm{Diag}\left(\textit{b}_{\textit{r}}\right)\right)}\cdot\frac{\partial \rm{vec}\left(\rm{Diag}\left(\textit{b}_{\textit{r}}\right)\right)}{\partial \rm{vec}^{\top}\left(\hat{B}(\lambda)\right)}.
\end{eqnarray*}
(a). Based on the expression of $N$,
$\frac{\partial \rm{vec}(\textit{N})}{\partial \rm{vec}^{\top}\left(\textit{U}_{\textit{r}}\right)}=\frac{\partial \rm{vec}(\textit{N})}{\partial \rm{vec}^{\top}\left(\tilde{\textit{N}}\right)}\cdot \frac{\partial \rm{vec}\left(\tilde{\textit{N}}\right)}{\partial \rm{vec}^{\top}\left(\textit{U}_{\textit{r}}\right)}+\frac{\partial \rm{vec}(\textit{N})}{\partial \rm{vec}^{\top}\left(\textit{I}_{\textit{p}_{1}}
-\textit{U}_{\textit{r}}\textit{U}^{\top}_{\textit{r}}\right)}\cdot \frac{\partial \rm{vec}
\left(\textit{I}_{\textit{p}_{1}}-\textit{U}_{\textit{r}}\textit{U}^{\top}_{\textit{r}}\right)}{\partial \rm{vec}^{\top}\left(\textit{U}_{\textit{r}}\right)}.$
\begin{eqnarray*}
\frac{\partial \rm{vec}(\textit{N})}{\partial \rm{vec}^{\top}\left(\tilde{\textit{N}}\right)}=
\frac{\partial \rm{vec}\left(\left(\textit{I}_{\textit{p}_{2}}\bigotimes\left(\textit{I}_{\textit{p}_{1}}
-\textit{U}_{\textit{r}}\textit{U}^{\top}_{\textit{r}}\right)\right)
\rm{vec}\left(\tilde{\textit{N}}\right)\right)}{\partial \rm{vec}^{\top}\left(\tilde{\textit{N}}\right)}
=\textit{I}_{\textit{p}_{2}}\bigotimes
\left(\textit{I}_{\textit{p}_{1}}-\textit{U}_{\textit{r}}\textit{U}^{\top}_{\textit{r}}\right).
\end{eqnarray*}
Due to the fact that
\begin{equation*}
\begin{aligned}
\rm{vec}(\tilde{\textit{N}})&=\frac{1}{\textit{n}\lambda}
\left(W^{-1}_{2}\bigotimes W^{-1}_{1}\right) \rm{vec}\left(\sum\limits_{\textit{i}=1}^{\textit{n}}\left(\textit{y}_{\textit{i}}-\left\langle \textit{X}_{\textit{i}},\hat{B}(\lambda)\right\rangle\right)\textit{X}_{\textit{i}}\right)\\
&=\frac{1}{\textit{n}\lambda}\left(W^{-1}_{2}\bigotimes W^{-1}_{1}\right)\left[\mathcal{X}^{\top}\textbf{y}-
\mathcal{X}^{\top}\mathcal{X}\rm{vec}\left(\hat{B}(\lambda)\right)\right]\\
&=\frac{1}{\textit{n}\lambda}\left(W^{-1}_{2}\bigotimes W^{-1}_{1}\right)\left[\mathcal{X}^{\top}\textbf{y}
-\mathcal{X}^{\top}\mathcal{X}\left(\left(\textit{W}^{-1}_{2}\textit{V}_{\textit{r}}\rm{Diag}\left(\textit{b}_{\textit{r}}\right)\right)\bigotimes \textit{W}^{-1}_{1}\right)\rm{vec}\left(\textit{U}_{\textit{r}}\right)\right],
\end{aligned}
\end{equation*}
the result of $\frac{\partial \rm{vec}(\tilde{\textit{N}})}{\partial \rm{vec}^{\top}\left(\textit{U}_{\textit{r}}\right)}$ is
\begin{eqnarray*}
\frac{\partial \rm{vec}\left(\tilde{\textit{N}}\right)}{\partial \rm{vec}^{\top}\left(\textit{U}_{\textit{r}}\right)}
=-\frac{1}{n\lambda}\left(W^{-1}_{2}\bigotimes W^{-1}_{1}\right)\mathcal{X}^{\top}\mathcal{X}\left[\left(\textit{W}^{-1}_{2}
\textit{V}_{\textit{r}}\rm{Diag}\left(\textit{b}_{\textit{r}}\right)\right)\bigotimes \textit{W}^{-1}_{1}\right].
\end{eqnarray*}
It is easy to get that
\begin{eqnarray*}
\frac{\partial \rm{vec}(\textit{N})}{\partial \rm{vec}^{\top}\left(\textit{I}_{\textit{p}_{1}}-\textit{U}_{\textit{r}}\textit{U}^{\top}_{\textit{r}}\right)}
=\frac{\partial \left(\left(\tilde{\textit{N}}^{\top}\bigotimes \textit{I}_{\textit{p}_{1}}\right)\rm{vec}\left(\textit{I}_{\textit{p}_{\textit{1}}}-\textit{U}_{\textit{r}}
\textit{U}^{\top}_{\textit{r}}\right)\right)}{\partial \rm{vec}^{\top}\left(\textit{I}_{\textit{p}_{1}}-\textit{U}_{\textit{r}}\textit{U}^{\top}_{\textit{r}}\right)}
=\tilde{\textit{N}}^{\top}\bigotimes \textit{I}_{\textit{p}_{1}}.
\end{eqnarray*}
The claim $iii)$ in Proposition \ref{pro2.2} leads to vec$\left(U^{\top}_{r}\right)=K_{rp_{1}}$vec$\left(U_{r}\right)$ and
\begin{equation*}
\begin{aligned}
\frac{\partial \rm{vec}\left(\textit{I}_{\textit{p}_{1}}-\textit{U}_{\textit{r}}\textit{U}^{\top}_{\textit{r}}\right)}{\partial \rm{vec}^{\top}\left(\textit{U}_{\textit{r}}\right)}&=
-\frac{\partial \rm{vec}\left(\textit{U}_{\textit{r}}\textit{U}^{\top}_{\textit{r}}\right)}{\partial \rm{vec}^{\top}\left(\textit{U}_{\textit{r}}\right)}-\frac{\partial \rm{vec}\left(\textit{U}_{\textit{r}}\textit{U}^{\top}_{\textit{r}}\right)}{\partial \rm{vec}^{\top}\left(\textit{U}^{\top}_{\textit{r}}\right)}\cdot \frac{\partial \rm{vec}(\textit{U}^{\top}_{\textit{r}})}{\partial \rm{vec}^{\top}\left(\textit{U}_{\textit{r}}\right)}\\
&=-\textit{U}_{\textit{r}}\bigotimes I_{p_{1}}-\left(I_{p_{1}}\bigotimes U_{r}\right)K_{rp_{1}}\\
&=-\textit{U}_{\textit{r}}\bigotimes I_{p_{1}}-K_{p^{2}_{1}}\left(\textit{U}_{\textit{r}}\bigotimes I_{p_{1}}\right)=-\left(I_{p^{2}_{1}}+K_{p^{2}_{1}}\right)\left(\textit{U}_{\textit{r}}\bigotimes I_{p_{1}}\right).
\end{aligned}
\end{equation*}
Hence,
\begin{equation*}
\begin{aligned}
\frac{\partial \rm{vec}(\textit{N})}{\partial \rm{vec}^{\top}(\textit{U}_{\textit{r}})}
=&-\frac{1}{n\lambda}\left[W^{-1}_{2}\bigotimes (\textit{I}_{\textit{p}_{1}}-\textit{U}_{\textit{r}}\textit{U}^{\top}_{\textit{r}})W^{-1}_{1}\right]
\mathcal{X}^{\top}\mathcal{X}\left[\left(\textit{W}^{-1}_{2}\textit{V}_{\textit{r}}\rm{Diag}\left(\textit{b}_{\textit{r}}\right)\right)\bigotimes \textit{W}^{-1}_{1}\right]\\
&-\left(\tilde{\textit{N}}^{\top}\bigotimes \textit{I}_{\textit{p}_{1}}\right)\left(I_{p^{2}_{1}}+K_{p^{2}_{1}}\right)
\left(\textit{U}_{\textit{r}}\bigotimes I_{p_{1}}\right).
\end{aligned}
\end{equation*}
(b). Based on the expression of $N$,
$\frac{\partial \rm{vec}(\textit{N})}{\partial \rm{vec}^{\top}\left(\textit{V}^{\top}_{\textit{r}}\right)}=\frac{\partial \rm{vec}(\textit{N})}{\partial \rm{vec}^{\top}\left(\tilde{\textit{N}}\right)}\cdot \frac{\partial \rm{vec}\left(\tilde{\textit{N}}\right)}{\partial \rm{vec}^{\top}\left(\textit{V}^{\top}_{\textit{r}}\right)}+\frac{\partial \rm{vec}(\textit{N})}{\partial \rm{vec}^{\top}\left(\textit{I}_{\textit{p}_{2}}-\textit{V}_{\textit{r}}\textit{V}^{\top}_{\textit{r}}\right)}\cdot \frac{\partial \rm{vec}\left(\textit{I}_{\textit{p}_{2}}-\textit{V}_{\textit{r}}\textit{V}^{\top}_{\textit{r}}\right)}{\partial \rm{vec}^{\top}\left(\textit{V}^{\top}_{\textit{r}}\right)}.$
\begin{eqnarray*}
\frac{\partial \rm{vec}(\textit{N})}{\partial \rm{vec}^{\top}\left(\tilde{\textit{N}}\right)}=
\frac{\partial \rm{vec}\left(\left(\left(\textit{I}_{\textit{p}_{2}}-\textit{V}_{\textit{r}}\textit{V}^{\top}_{\textit{r}}\right)\bigotimes\textit{I}_{\textit{p}_{1}}\right)
\rm{vec}(\tilde{\textit{N}})\right)}{\partial \rm{vec}^{\top}\left(\tilde{\textit{N}}\right)}=\left(\textit{I}_{\textit{p}_{2}}-\textit{V}_{\textit{r}}\textit{V}^{\top}_{\textit{r}}\right)\bigotimes\textit{I}_{\textit{p}_{1}}.
\end{eqnarray*}
Due to the fact that
\begin{equation*}
\begin{aligned}
\rm{vec}\left(\tilde{\textit{N}}\right)&=\frac{1}{\textit{n}\lambda}
\left(W^{-1}_{2}\bigotimes W^{-1}_{1}\right) \rm{vec}\left(\sum\limits_{\textit{i}=1}^{\textit{n}}\left(\textit{y}_{\textit{i}}-\left\langle \textit{X}_{\textit{i}},\hat{\textit{B}}(\lambda)\right\rangle\right)\textit{X}_{\textit{i}}\right)\\
&=\frac{1}{\textit{n}\lambda}\left(W^{-1}_{2}\bigotimes W^{-1}_{1}\right)
\left[\mathcal{X}^{\top}\textbf{y}-\mathcal{X}^{\top}\mathcal{X}\rm{vec}(\hat{\textit{B}}(\lambda))\right]\\
&=\frac{1}{\textit{n}\lambda}\left(W^{-1}_{2}\bigotimes W^{-1}_{1}\right)
\left[\mathcal{X}^{\top}\textbf{y}-\mathcal{X}^{\top}\mathcal{X}\left(\textit{W}^{-1}_{2}\bigotimes \left(\textit{W}^{-1}_{1}\textit{U}_{\textit{r}}\rm{Diag}\left(\textit{b}_{\textit{r}}\right)\right)\right)\rm{vec}\left(\textit{V}^{\top}_{\textit{r}}\right)\right],
\end{aligned}
\end{equation*}
the result of $\frac{\partial \rm{vec}\left(\tilde{\textit{N}}\right)}{\partial \rm{vec}^{\top}\left(\textit{V}^{\top}_{\textit{r}}\right)}$ is
\begin{eqnarray*}
\frac{\partial \rm{vec}\left(\tilde{\textit{N}}\right)}{\partial \rm{vec}^{\top}\left(\textit{V}^{\top}_{\textit{r}}\right)}
=-\frac{1}{n\lambda}\left(W^{-1}_{2}\bigotimes W^{-1}_{1}\right)
\mathcal{X}^{\top}\mathcal{X}\left[\textit{W}^{-1}_{2}\bigotimes \left(\textit{W}^{-1}_{1}\textit{U}_{\textit{r}}\rm{Diag}\left(\textit{b}_{\textit{r}}\right)\right)\right].
\end{eqnarray*}
It is easy to get that
\begin{eqnarray*}
\frac{\partial \rm{vec}(\textit{N})}{\partial \rm{vec}^{\top}\left(\textit{I}_{\textit{p}_{2}}-\textit{V}_{\textit{r}}\textit{V}^{\top}_{\textit{r}}\right)}
=\frac{\partial \rm{vec}\left(\left(\textit{I}_{\textit{p}_{2}}\bigotimes \tilde{\textit{N}}\right)\rm{vec}\left(\textit{I}_{\textit{p}_{2}}-\textit{V}_{\textit{r}}\textit{V}^{\top}_{\textit{r}}\right)\right)}{\partial \rm{vec}^{\top}\left(\textit{I}_{\textit{p}_{2}}-\textit{V}_{\textit{r}}\textit{V}^{\top}_{\textit{r}}\right)}=\textit{I}_{\textit{p}_{2}}\bigotimes \tilde{\textit{N}}.
\end{eqnarray*}
From the result that vec$\left(V^{\top}\right)=K_{rp_{2}}$vec$\left(V_{r}\right)$, we have
\begin{equation*}
\begin{aligned}
\frac{\partial \rm{vec}\left(\textit{I}_{\textit{p}_{2}}-\textit{V}_{\textit{r}}\textit{V}^{\top}_{\textit{r}}\right)}{\partial \rm{vec}^{\top}\left(\textit{V}^{\top}_{\textit{r}}\right)}&=
-\frac{\partial \rm{vec}\left(\textit{V}_{\textit{r}}\textit{V}^{\top}_{\textit{r}}\right)}{\partial \rm{vec}^{\top}\left(\textit{V}^{\top}_{\textit{r}}\right)}-\frac{\partial \rm{vec}\left(\textit{V}_{\textit{r}}\textit{V}^{\top}_{\textit{r}}\right)}{\partial \rm{vec}^{\top}\left(\textit{V}_{\textit{r}}\right)}\cdot \frac{\partial \rm{vec}\left(\textit{V}_{\textit{r}}\right)}{\partial \rm{vec}^{\top}\left(\textit{V}^{\top}_{\textit{r}}\right)}\\
&=-I_{p_{2}}\bigotimes\textit{V}_{\textit{r}} -\left(V_{r}\bigotimes I_{p_{1}}\right)K_{rp_{2}}\\
&=-I_{p_{2}}\bigotimes\textit{V}_{\textit{r}}-K_{p^{2}_{2}}
\left(I_{p_{2}}\bigotimes\textit{V}_{\textit{r}}\right)
=-\left(I_{p^{2}_{2}}+K_{p^{2}_{2}}
\right)\left(I_{p_{2}}\bigotimes\textit{V}_{\textit{r}}\right).
\end{aligned}
\end{equation*}
Hence,
\begin{equation*}
\begin{aligned}
\frac{\partial \rm{vec}(\textit{N})}{\partial \rm{vec}^{\top}\left(\textit{V}^{\top}_{\textit{r}}\right)}
=&-\frac{1}{n\lambda}\left[\left(\left(\textit{I}_{\textit{p}_{2}}-\textit{V}_{\textit{r}}
\textit{V}^{\top}_{\textit{r}}\right)W^{-1}_{2}\right)\bigotimes W^{-1}_{1}\right]
\mathcal{X}^{\top}\mathcal{X}\left[\textit{W}^{-1}_{2}\bigotimes \left(\textit{W}^{-1}_{1}\textit{U}_{\textit{r}}\rm{Diag}\left(\textit{b}_{\textit{r}}\right)\right)\right]\\
&-\left(\textit{I}_{\textit{p}_{2}}\bigotimes \tilde{\textit{N}}\right)\left(I_{p^{2}_{2}}+K_{p^{2}_{2}}\right)
\left(I_{p_{2}}\bigotimes\textit{V}_{\textit{r}}\right).
\end{aligned}
\end{equation*}
(c). In the similar way, the result of $\frac{\partial \rm{vec}(\textit{N})}{\partial \rm{vec}^{\top}\left(\rm{Diag}\left(\textit{b}_{\textit{r}}\right)\right)}$ is showed as follows.
\begin{equation*}
\begin{aligned}
\frac{\partial \rm{vec}(\textit{N})}{\partial \rm{vec}^{\top}\left(\rm{Diag}\left(\textit{b}_{\textit{r}}\right)\right)}
&=\frac{\partial \rm{vec}(\textit{N})}{\partial \rm{vec}^{\top}\left(\tilde{\textit{N}}\right)}\cdot
\frac{\partial \rm{vec}\left(\tilde{\textit{N}}\right)}{\partial \rm{vec}^{\top}\left(\rm{Diag}\left(\textit{b}_{\textit{r}}\right)\right)}\\
&=-\frac{1}{n\lambda}\left[\left(\textit{I}_{\textit{p}_{2}}
-\textit{V}_{\textit{r}}\textit{V}^{\top}_{\textit{r}}\right)\bigotimes\textit{I}_{\textit{p}_{1}}\right]
\cdot \left(\textit{W}^{-1}_{2}\bigotimes \textit{W}^{-1}_{1}\right)\mathcal{X}^{\top}\mathcal{X}
\left[\left(\textit{W}^{-1}_{2}V_{r}\right)\bigotimes\left(\textit{W}^{-1}_{1}U_{r}\right)\right]\\
&=-\frac{1}{n\lambda}\left[\left(\left(\textit{I}_{\textit{p}_{2}}
-\textit{V}_{\textit{r}}\textit{V}^{\top}_{\textit{r}}\right)\textit{W}^{-1}_{2}\right)\bigotimes \textit{W}^{-1}_{1})\right]\mathcal{X}^{\top}\mathcal{X}
\left[\left(\textit{W}^{-1}_{2}V_{r}\right)\bigotimes\left(\textit{W}^{-1}_{1}U_{r}\right)\right].
\end{aligned}
\end{equation*}
(d). Because $W_{1}\hat{B}(\lambda)W_{2}=U_{r}$Diag$(b_{r})V^{\top}_{r}$, then Diag$(b_{r})=U^{\top}_{r}W_{1}\hat{B}(\lambda)W_{2}V_{r}$ and
\begin{eqnarray*}
\frac{\partial\rm{vec}\left(Diag\left(\textit{b}_{\textit{r}}\right)\right)
}{\partial\rm{vec}^{\top}\left(\hat{B}(\lambda)\right)}
=\left(V^{\top}_{r}W_{2}\right)\bigotimes\left(U^{\top}_{r}W_{1}\right).
\end{eqnarray*}
In addition, results of $\frac{\partial \rm{vec}\left(\textit{U}_{\textit{r}}\right)}{\partial \rm{vec}^{\top}\left(\hat{\textit{B}}(\lambda)\right)}$ and $\frac{\partial \rm{vec}\left(\textit{V}^{\top}_{\textit{r}}\right)}{\partial \rm{vec}^{\top}\left(\hat{\textit{B}}(\lambda)\right)}$ have been computed in Lemma \ref{lemma3.1}.

From (a)-(d), we know that
%\begin{equation*}
%\begin{aligned}
%\frac{\partial \rm{vec}(\textit{N})}{\partial \rm{vec}^{\textit{T}}(\hat{B}(\lambda))}
%=&-\frac{1}{n\lambda}\left[\left((\textit{I}_{\textit{p}_{2}}-\textit{V}_{\textit{r}}\textit{V}^{\textit{T}}_{\textit{r}})W^{-1}_{2}\right)\bigotimes ((\textit{I}_{\textit{p}_{1}}-\textit{U}_{\textit{r}}\textit{U}^{\textit{T}}_{\textit{r}})W^{-1}_{1})\right]
%\mathcal{X}^{\top}\mathcal{X}\cdot\\
%&~~~~\left[(\textit{W}^{-1}_{2}\textit{V}_{\textit{r}}\textit{V}^{\textit{T}}_{\textit{r}}\textit{W}_{2})\bigotimes I_{p_{1}}+ I_{p_{2}}\bigotimes (\textit{W}^{-1}_{1}\textit{U}_{\textit{r}}\textit{U}^{\textit{T}}_{\textit{r}}\textit{W}_{1})
%+(\textit{W}^{-1}_{2}\textit{V}_{\textit{r}}\textit{V}^{\textit{T}}_{\textit{r}}\textit{W}_{2})\bigotimes (\textit{W}^{-1}_{1}\textit{U}_{\textit{r}}\textit{U}^{\textit{T}}_{\textit{r}}\textit{W}_{1})\right]\\
%&-(\tilde{\textit{N}}^{\top}\bigotimes \textit{I}_{\textit{p}_{1}})(I_{p^{2}_{1}}+K_{p^{2}_{1}})
%\left[(\textit{U}_{\textit{r}}\rm{Diag}(\textit{b}_{\textit{r}})^{-1}\textit{V}^{\textit{T}}_{\textit{r}}\textit{W}_{2})\bigotimes \textit{W}_{1}\right]\\
%&-(\textit{I}_{\textit{p}_{2}}\bigotimes \tilde{\textit{N}})(I_{p^{2}_{2}}+K_{p^{2}_{2}})
%\left[\textit{W}_{2}\bigotimes (\textit{V}_{\textit{r}}\rm{Diag}(\textit{b}_{\textit{r}})^{-1}\textit{U}^{\textit{T}}_{\textit{r}}\textit{W}_{1})\right].
%\end{aligned}
%\end{equation*}
%Then,
\begin{equation*}
\begin{aligned}
M^{(2)}_{r}=&\lambda \left(W_{2}\bigotimes W_{1}\right)\cdot \frac{\partial \rm{vec}(\textit{N})}{\partial \rm{vec}^{\top}\left(\hat{B}(\lambda)\right)}\\
&=-\left[\textit{I}_{\textit{p}_{2}}\bigotimes \left(W_{1}\left(\textit{I}_{\textit{p}_{1}}-\textit{U}_{\textit{r}}\textit{U}^{\top}_{\textit{r}}\right)W^{-1}_{1}\right)\right]
\cdot\frac{\mathcal{X}^{\top}\mathcal{X}}{n}\cdot
\left[\left(\textit{W}^{-1}_{2}\textit{V}_{\textit{r}}\textit{V}^{\top}_{\textit{r}}\textit{W}_{2}\right)\bigotimes I_{p_{1}}\right]\\
&~~~~-\left[\left(W_{2}(\textit{I}_{\textit{p}_{2}}-\textit{V}_{\textit{r}}\textit{V}^{\top}_{\textit{r}})W^{-1}_{2}\right)\bigotimes
\textit{I}_{\textit{p}_{1}}\right]\cdot\frac{\mathcal{X}^{\top}\mathcal{X}}{n}\cdot \left[I_{p_{2}}\bigotimes\left(\textit{W}^{-1}_{1}\textit{U}_{\textit{r}}\textit{U}^{\top}_{\textit{r}}\textit{W}_{1}\right)\right]\\
&~~~~-\left[\left(W_{2}\left(\textit{I}_{\textit{p}_{2}}-\textit{V}_{\textit{r}}\textit{V}^{\top}_{\textit{r}}\right)W^{-1}_{2}\right)\bigotimes
\textit{I}_{\textit{p}_{1}}\right]\cdot\frac{\mathcal{X}^{\top}\mathcal{X}}{n}\cdot
\left[\left(\textit{W}^{-1}_{2}\textit{V}_{\textit{r}}\textit{V}^{\top}_{\textit{r}}\textit{W}_{2}\right)\bigotimes
\left(\textit{W}^{-1}_{1}\textit{U}_{\textit{r}}\textit{U}^{\top}_{\textit{r}}\textit{W}_{1}\right)\right]\\
&~~~~-\lambda\left[\left(W_{2}\tilde{\textit{N}}^{\top}\right)\bigotimes W_{1}\right]\left(I_{p^{2}_{1}}+K_{p^{2}_{1}}\right)
\left[\left(\textit{U}_{\textit{r}}\rm{Diag}(\textit{b}_{\textit{r}})^{-1}\textit{V}^{\top}_{\textit{r}}\textit{W}_{2}\right)\bigotimes \textit{W}_{1}\right]\\
&~~~~-\lambda\left[W_{2}\bigotimes \left(W_{1}\tilde{\textit{N}}\right)\right]\left(I_{p^{2}_{2}}+K_{p^{2}_{2}}\right)
\left[\textit{W}_{2}\bigotimes \left(\textit{V}_{\textit{r}}\rm{Diag}(\textit{b}_{\textit{r}})^{-1}\textit{U}^{\top}_{\textit{r}}\textit{W}_{1}\right)\right].
\end{aligned}
\end{equation*}
By simple computation, we know that $\left(W_{1}\hat{B}(\lambda)W_{2}\right)^{+}=\textit{V}_{\textit{r}}\rm{Diag}
\left(\textit{b}_{\textit{r}}\right)^{-1}\textit{U}^{\top}_{\textit{r}}$. Replace the expression of $\tilde{\textit{N}}$ and $E_{\lambda}=\frac{1}{n}\sum\limits_{i=1}^{n}\left(y_{i}-\left\langle X_{i},\hat{B}(\lambda)\right\rangle\right)X_{i}$ into $M^{(2)}_{r}$.  Then, we have
\begin{equation*}
\begin{aligned}
M^{(2)}_{r}=&-\left[\textit{I}_{\textit{p}_{2}}\bigotimes \left(W_{1}\left(\textit{I}_{\textit{p}_{1}}-\textit{U}_{\textit{r}}\textit{U}^{\top}_{\textit{r}}\right)W^{-1}_{1}\right)\right]
\cdot\frac{\mathcal{X}^{\top}\mathcal{X}}{n}\cdot
\left[\left(\textit{W}^{-1}_{2}\textit{V}_{\textit{r}}\textit{V}^{\top}_{\textit{r}}\textit{W}_{2}\right)\bigotimes I_{p_{1}}\right]\\
&-\left[\left(W_{2}\left(\textit{I}_{\textit{p}_{2}}-\textit{V}_{\textit{r}}\textit{V}^{\top}_{\textit{r}}\right)W^{-1}_{2}\right)\bigotimes
\textit{I}_{\textit{p}_{1}}\right]\cdot\frac{\mathcal{X}^{\top}\mathcal{X}}{n}\cdot \left[I_{p_{2}}\bigotimes\left(\textit{W}^{-1}_{1}\textit{U}_{\textit{r}}\textit{U}^{\top}_{\textit{r}}\textit{W}_{1}\right)\right]\\
&-\left[\left(W_{2}(\textit{I}_{\textit{p}_{2}}-\textit{V}_{\textit{r}}\textit{V}^{\top}_{\textit{r}})W^{-1}_{2}\right)\bigotimes
\textit{I}_{\textit{p}_{1}}\right]\cdot\frac{\mathcal{X}^{\top}\mathcal{X}}{n}\cdot
\left[\left(\textit{W}^{-1}_{2}\textit{V}_{\textit{r}}\textit{V}^{\top}_{\textit{r}}\textit{W}_{2}\right)\bigotimes
\left(\textit{W}^{-1}_{1}\textit{U}_{\textit{r}}\textit{U}^{\top}_{\textit{r}}\textit{W}_{1}\right)\right]\\
&-\lambda\left[\left(\textit{E}^{\top}_{\lambda}\textit{W}^{-1}_{1}\right)
\bigotimes W_{1}\right]
\left(I_{p^{2}_{1}}+K_{p^{2}_{1}}\right)
\left[\left(\left(\left(W_{1}\hat{B}(\lambda)W_{2}\right)^{+}\right)^{\top}
\textit{W}_{2}\right)\bigotimes \textit{W}_{1}\right]\\
&-\lambda\left[W_{2}\bigotimes \left(E_{\lambda}W^{-1}_{2}\right)\right]
\left(I_{p^{2}_{2}}+K_{p^{2}_{2}}\right)
\left[\textit{W}_{2}\bigotimes \left(\left(W_{1}\hat{B}(\lambda)W_{2}\right)^{+}\textit{W}_{1}\right)\right].
\end{aligned}
\end{equation*}
\end{proof}
\begin{Remark}\label{remark 3.2}
There are some facts about the result of this degrees of freedom.

In (\ref{eq1}), if $\lambda=0$ and $B_{\rm LS}$ has full rank, we have $M^{(2)}_{r}=0$ and
 $M_{r}=\frac{1}{\textit{n}}\sum\limits_{\textit{i}=1}^{\textit{n}}
\rm{vec}\left(\textit{X}_{\textit{i}}\right)\rm{vec}\left(\textit{X}_{\textit{i}}\right)^{\top}$, which leads to
\begin{equation*}
\begin{aligned}
\hat{\rm{df}}_{\lambda}&=\frac{1}{\textit{n}}\sum\limits_{\textit{k}=1}^{\textit{n}}
\rm{vec}(\textit{X}_{\textit{k}})^{\top}\left[\frac{1}{\textit{n}}\sum\limits_{\textit{i}=1}^{\textit{n}}
\rm{vec}\left(\textit{X}_{\textit{i}}\right)\rm{vec}\left(\textit{X}_{\textit{i}}\right)^{\top}\right]^{+}\rm{vec}(\textit{X}_{\textit{k}})=\rm{rank}\left(\frac{1}{\textit{n}}\sum\limits_{\textit{i}=1}^{\textit{n}}
\rm{vec}\left(\textit{X}_{\textit{i}}\right)\rm{vec}\left(\textit{X}_{\textit{i}}\right)^{\top}\right).
\end{aligned}
\end{equation*}

When $\lambda$ is sufficiently large such that $\hat{B}(\lambda)=0$, we have $M^{(1)}_{r}=0$ and $M^{(2)}_{r}=0$, which also leads to
\begin{center}
$\hat{\rm{df}}_{\lambda}=\rm{rank}\left(\frac{1}{\textit{n}}\sum\limits_{\textit{i}=1}^{\textit{n}}
\rm{vec}\left(\textit{X}_{\textit{i}}\right)\rm{vec}\left(\textit{X}_{\textit{i}}\right)^{\top}\right).$
\end{center}

In Zhou and Li \cite{zhou2014regularized}, they considered the special case of (\ref{eq1}), with the weight matrixes $W_{1}=I_{p_{1}}$ and $W_{2}=I_{p_{2}}$. In \cite{zhou2014regularized}, they presented the value of degrees of freedom of their model, under the assumption that $\sum\limits_{\textit{i}=1}^{\textit{n}}
\rm{vec}\left(\textit{X}_{\textit{i}}\right)\rm{vec}\left(\textit{X}_{\textit{i}}\right)^{\top}=I_{p_{1}p_{2}}$. Under this condition and $\lambda=0$, our degrees of freedom is
$\hat{\rm{df}}_{0}=p_{1}p_{2}$, which equals to the degrees of freedom in their paper.
\end{Remark}
%\begin{figure}[htbp]
%\begin{minipage}[t]{0.4\linewidth}
%\centering
%\includegraphics[width=2.5in]{df_1.eps}
%%\caption{$n=100,p_{1}=8,p_{2}=6$}
%\end{minipage}
%\begin{minipage}[t]{0.4\linewidth}
%\centering
%\includegraphics[width=2.5in]{df_4.eps}
%\end{minipage}
%\begin{minipage}[t]{0.4\linewidth}
%\centering
%\includegraphics[width=2.5in]{df_2.eps}
%\end{minipage}
%\begin{minipage}[t]{0.4\linewidth}
%\centering
%\includegraphics[width=2.5in]{df_3.eps}
%\end{minipage}
%\caption{The relationship between data sets and degrees of freedom.}
%\end{figure}
%From Figure 1, we know the following results.
%\begin{itemize}
%\item{No matter $p_{1}p_{2}>n$ or $p_{1}p_{2}<n$, $\hat{\rm{df}}_{\lambda}\leq n$ holds.}
%\item{If $p_{1}p_{2}<n$, $\hat{\rm{df}}_{\lambda}=n$ holds only when rank$(\hat{B}(\lambda))=0$.}
%\item{If $p_{1}p_{2}>n$, $\hat{\rm{df}}_{\lambda}=n$ holds when rank$(\hat{B}(\lambda))$ is small.}
%\item{If $p_{1}p_{2}\gg n$, $\hat{\rm{df}}_{\lambda}=n$ may hold for  any rank of $(\hat{B}(\lambda))$.}
%\end{itemize}
%\newpage

For any $\lambda\geq0$, we define a Bayesian information criterion (BIC) as
\begin{eqnarray}\label{eq2}
\rm{BIC}_{\lambda}=\rm{log}\left(\frac{1}{\textit{n}}
\sum\limits_{\textit{i}=1}^{\textit{n}}\left(\textit{y}_{\textit{i}}-\left\langle \textit{X}_{\textit{i}},\hat{\textit{B}}(\lambda)\right\rangle\right)^{2}\right)+\hat{\rm{df}}_{\lambda}\cdot\frac{\rm{log}(\textit{n})}{\textit{n}},
\end{eqnarray}
where $\hat{B}(\lambda)$ is the solution of (\ref{eq1}) and $\hat{\rm{df}}_{\lambda}$  is given in Theorem \ref{lemma df}. Then, the optimal tuning parameter is
\begin{center}
$\lambda^{*}=\underset{\lambda\geq0}{\arg\min}$BIC$_{\lambda}=
\underset{\lambda\geq0}{\arg\min}\left\{\rm{log}\left(\frac{1}{\textit{n}}
\sum\limits_{\textit{i}=1}^{\textit{n}}\left(\textit{y}_{\textit{i}}-\left\langle \textit{X}_{\textit{i}},\hat{\textit{B}}(\lambda)\right\rangle\right)^{2}\right)+\hat{\rm{df}}_{\lambda}\cdot\frac{\rm{log}(\textit{n})}{\textit{n}}\right\}$.
\end{center}

To prove the selection result of this new defined BIC, we introduce some technique conditions and new notations as follows.
\begin{itemize}
\item{(C1) $\{X_{i}\}_{i=1}^{n}$ are sampled i.i.d. from $X$ and $\{y_{i}\}_{i=1}^{n}$ are sampled i.i.d. from $y$. The predictor $X$ is standardized with $\|X\|_{F}=1$. $X$ and $y$ have finite fourth order moments, i.e.,
    \begin{center}
    $\underset{j,k}\max\left\{\rm{E}\left(\textit{X}^{4}_{\textit{j,k}}\right)\right\}<\infty$ and E$\left(y^{4}\right)<\infty$.
    \end{center}}
\item{(C2) $\exists B^{*}$ such that $B^{*}\neq0$, rank$(B^{*})=r^{*}<$min$\{p_{1},p_{2}\}$ and
\begin{center}
E$\left(y_{i}|X_{1},X_{2},\cdots,X_{n}\right)=\left\langle X_{i},B^{*}\right\rangle, \quad i=1,2,\cdots,n.$
\end{center}
\begin{center}
Var$\left(y_{i}|X_{1},X_{2},\cdots,X_{n}\right)=\sigma^{2}, \quad i=1,2,\cdots,n.$
\end{center}
}
\item{(C3) $\Sigma=$E$\left(\rm{vec}\left(\textit{X}\right)\rm{vec}\left(\textit{X}\right)^{\top}\right)$ is positive definite, which means there is a $\kappa>0$ such that the smallest eigenvalue of  $\Sigma$ is larger than $\kappa$.}
\item{(C4) $X$ and $\epsilon$ are independent with $\epsilon\sim$N$(0,\sigma^{2})$.}
%\item{(C5) $\|\Lambda\|_{2}<1$ with $$vec(\Lambda)=\left((\tilde{V}_{\perp}\otimes\tilde{U}_{\perp})^{\top}\Sigma^{-1}(\tilde{V}_{\perp}\otimes\tilde{U}_{\perp})\right)^{-1}
%    \left((\tilde{V}_{\perp}\otimes\tilde{U}_{\perp})^{\top}\Sigma^{-1}(\tilde{V}\otimes\tilde{U})vec(I_{p_{1}\times p_{2}})\right),$$
%    where the singular value decomposition of $B^{*}$ is $B^{*}=\tilde{U}diag(b^{*})\tilde{V}^{\top}$ with $\tilde{U}\in\mathbb{R}^{p_{1}\times r^{*}}$, $\tilde{V}\in\mathbb{R}^{p_{2}\times r^{*}}$, $\tilde{U}_{\perp}\in\mathbb{R}^{p_{1}\times(p_{1}-r^{*})}$ and $\tilde{V}_{\perp}\in\mathbb{R}^{p_{2}\times(p_{2}-r^{*})}$ are orthogonal complements  of $\tilde{U}$ and $\tilde{V}$, respectively.}
\end{itemize}
\begin{Remark}\label{remark3.1}
Condition (C2) states that there exists a low rank matrix $B^{*}$ such that the relationship of $X$ and $y$ is linear, which is a common assumption and ensures that the true solution has low rank. Conditions (C1) and (C3) are nature sampling assumptions. Condition (C4) guarantees that the random error is independent with the predictor.
%%Condition (C5) is a special assumption to ensure the consistency and rank consistency of the model (\ref{eq1}).
%Under conditions (C2) and (C3),  Bach \cite[Lemma 1]{B08} claimed that $\Sigma\in\mathbb{R}^{p_{1}\times p_{2}}$ satisfies
%\begin{center}
%E$\left(\left\|\frac{1}{n}\sum\limits_{i=1}^{n}\rm{vec}\left(\textit{X}_{\textit{i}}\right)\rm{vec}\left(\textit{X}_{\textit{i}}\right)^{\top}-\Sigma\right\|^{2}_{F}\right)
%=O\left(\frac{1}{n}\right)$ and $\frac{1}{\sqrt{n}}\sum\limits_{i=1}^{n}\epsilon_{i}\rm{vec}\left(\textit{X}_{\textit{i}}\right)\underset{\textit{d}}{\rightarrow}N(0,\sigma^{2}\Sigma)$.
%\end{center}

Under conditions (C1)-(C4), Bach \cite[Theorem 15]{bach2008consistency} proved the consistency and rank consistency of the solution of (\ref{eq1}). That is, if $\lambda_{n}n^{1/2+\gamma/2}\rightarrow\infty$ and $\lambda_{n}n^{1/2}\rightarrow0$ with $n\rightarrow\infty$,
$\underset{n\rightarrow\infty}{\rm{lim}}$P$\left(\hat{\textit{B}}(\lambda_{n})=B^{*}\right)=1$ and $\underset{n\rightarrow\infty}{\rm{lim}}$P$($r$_{\lambda_{n}}=r^{*})=1$.
\end{Remark}

Let $\Omega_{-}$ denote the underfitted case, which means $\Omega_{-}=\{\lambda:0<r_{\lambda}<r^{*}\}$. Let $\Omega_{+}$ denote the overfitted case, which means $\Omega_{+}=\{\lambda:r_{\lambda}>r^{*}\}$. Next, we prove the rank selection consistency of BIC$_{\lambda}$.
\begin{Theorem}\label{theorem3.1}
Assume technical conditions (C1)-(C4). Let $\lambda_{n}n^{1/2+\gamma/2}\rightarrow\infty$ and $\lambda_{n}n^{1/2}\rightarrow0$ with $n\rightarrow\infty$. For any $\lambda\in \Omega_{+}\bigcup\Omega_{-}$,
\begin{center}
$\underset{n\rightarrow\infty}{\rm{lim}}$P$\left(\rm{BIC}_{\lambda}>\rm{BIC}_{\lambda_{\textit{n}}}\right)=1$,
\end{center}
\end{Theorem}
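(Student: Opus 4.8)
The plan is to control the two ingredients of the criterion separately. Write $R_n(\lambda)=\frac{1}{n}\sum_{i=1}^n(y_i-\langle X_i,\hat B(\lambda)\rangle)^2$, so that $\mathrm{BIC}_\lambda-\mathrm{BIC}_{\lambda_n}=\log\frac{R_n(\lambda)}{R_n(\lambda_n)}+(\hat{\mathrm{df}}_\lambda-\hat{\mathrm{df}}_{\lambda_n})\frac{\log n}{n}$. By the consistency result of Bach recorded in Remark \ref{remark3.1}, on an event whose probability tends to one we have $r_{\lambda_n}=r^*$ and $\hat B(\lambda_n)=B^*$, so that $R_n(\lambda_n)=\frac1n\sum_i\epsilon_i^2\xrightarrow{p}\sigma^2>0$; in particular the denominator stays bounded away from $0$ and $\infty$. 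I would treat $\Omega_-$ and $\Omega_+$ separately, showing that in each case one of the two terms above is positive and of strictly larger order than the other.

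For the underfitted case $\lambda\in\Omega_-$ (so $r_\lambda<r^*$), write $\Delta=B^*-\hat B(\lambda)$ and expand $R_n(\lambda)=\frac1n\sum_i\langle X_i,\Delta\rangle^2+\frac2n\sum_i\epsilon_i\langle X_i,\Delta\rangle+\frac1n\sum_i\epsilon_i^2$. The quadratic term equals $\mathrm{vec}(\Delta)^\top\hat\Sigma\,\mathrm{vec}(\Delta)$ with $\hat\Sigma=\frac1n\sum_i\mathrm{vec}(X_i)\mathrm{vec}(X_i)^\top$; by (C1), (C3) and the law of large numbers $\hat\Sigma\succeq(\kappa-o_p(1))I$, while $\mathrm{rank}(\hat B(\lambda))<r^*$ forces $\|\Delta\|_F\ge\sigma_{r^*}(B^*)>0$ (the distance of $B^*$ from the lower-rank matrices). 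Hence the quadratic term is bounded below by a strictly positive constant, the cross term is $O_p(n^{-1/2})$ by Cauchy--Schwarz together with $\|\frac1n\sum_i\epsilon_i X_i\|_F=O_p(n^{-1/2})$ and the objective-comparison bound $\|\hat B(\lambda)\|_F=O_p(1)$, and therefore $R_n(\lambda)\ge\sigma^2+c_0-o_p(1)$ for some fixed $c_0>0$. Consequently $\log\frac{R_n(\lambda)}{R_n(\lambda_n)}\ge\log(1+c_0/\sigma^2)>0$ with probability tending to one, whereas $|\hat{\mathrm{df}}_\lambda-\hat{\mathrm{df}}_{\lambda_n}|\frac{\log n}{n}\le p_1p_2\frac{\log n}{n}\to0$ since the degrees of freedom are bounded by $p_1p_2$. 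The positive constant dominates and the claim follows on $\Omega_-$.

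For the overfitted case $\lambda\in\Omega_+$ (so $r_\lambda>r^*$) the roles reverse. Here $\hat B(\lambda)$ still captures the whole signal, and I would first establish $\|\hat B(\lambda)-B^*\|_F=O_p(n^{-1/2})$ (producing a rank exceeding $r^*$ requires $\lambda$ small enough that the fit stays within an $n^{-1/2}$ neighbourhood of the $\sqrt n$-consistent least-squares estimate). Writing $\hat\Delta=\hat B(\lambda)-B^*$ and using $\hat B(\lambda_n)=B^*$ gives $R_n(\lambda)-R_n(\lambda_n)=\frac1n\sum_i\langle X_i,\hat\Delta\rangle^2-\frac2n\sum_i\epsilon_i\langle X_i,\hat\Delta\rangle=O_p(n^{-1})$, whence $\log\frac{R_n(\lambda)}{R_n(\lambda_n)}=O_p(n^{-1})$. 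The criterion difference is then $O_p(n^{-1})+(\hat{\mathrm{df}}_\lambda-\hat{\mathrm{df}}_{\lambda_n})\frac{\log n}{n}$, and it remains to show that the penalty gap is bounded below by $c\frac{\log n}{n}$ for some fixed $c>0$; because $\log n\to\infty$ this dominates the $O_p(n^{-1})$ residual term and yields the result.

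The main obstacle is precisely this last step: proving $\hat{\mathrm{df}}_\lambda-\hat{\mathrm{df}}_{\lambda_n}\ge c>0$ with probability tending to one when $r_\lambda>r^*=r_{\lambda_n}$. Since $\hat{\mathrm{df}}_\lambda=\mathrm{tr}(M_r^+\hat\Sigma)$ has the intricate form coming from Lemmas \ref{lemma3.1}--\ref{lemma3.2}, I would analyse its leading behaviour as $n\to\infty$: using $M^{(1)}_r\succeq0$, the explicit $M^{(2)}_r$, and the design regularity under (C1)--(C3), I expect $\hat{\mathrm{df}}_\lambda$ to be, up to $o_p(1)$, an increasing function of the rank $r_\lambda$ whose increment when the rank passes from $r^*$ to $r^*+1$ is of order $p_1+p_2-2r^*>0$ (the extra degrees of freedom of one added singular direction, with $r^*<\min\{p_1,p_2\}$ by (C2)). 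Extracting this monotone gap rigorously from the complicated degrees-of-freedom expression --- rather than the soft bounds that suffice for the residual terms --- is the delicate part of the argument.
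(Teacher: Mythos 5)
Your treatment of the underfitted set $\Omega_-$ is essentially the paper's own proof. The paper writes $\mathrm{BIC}_\lambda-\mathrm{BIC}_{\lambda_n}=a+b$, expands $a=\log(1+a_1+a_2)$ with the same quadratic/cross-term split you use, lower-bounds it via $\log(1+x)\geq\min\{\log 2,\,x/2\}$, invokes Bach's consistency from Remark \ref{remark3.1} (so that $P(\hat B(\lambda_n)=B^*)\to 1$ and the denominator tends to $\sigma^2$), and uses (C3) to conclude $a\geq\min\bigl\{\log 2,\tfrac{\kappa}{2\sigma^2}\|B^*-\hat B(\lambda)\|_F^2\bigr\}$ with probability tending to one, while $b\to 0$ because $\hat{\mathrm{df}}_\lambda$ stays bounded. (One small discrepancy: the paper bounds the degrees of freedom by $\frac1n\sum_k\|X_k\|_F^2\cdot\|M_r^+\|_2$ via Cauchy--Schwarz; your claim $\hat{\mathrm{df}}_\lambda\leq p_1p_2$ is not justified from the trace formula, though only boundedness is needed.)

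The genuine gap is the overfitted case $\Omega_+$, which you correctly identify but do not close. Your plan needs two unproven inputs: (i) $\|\hat B(\lambda)-B^*\|_F=O_p(n^{-1/2})$ for $\lambda\in\Omega_+$, and (ii) the penalty gap $\hat{\mathrm{df}}_\lambda-\hat{\mathrm{df}}_{\lambda_n}\geq c>0$ with probability tending to one. Item (ii) is exactly what drives BIC-consistency proofs in the linear-model literature, where df is the size of the active set and the gap is trivially at least $1$; here df is $\mathrm{tr}(M_r^+\hat\Sigma)$ with $M_r$ containing the non-positive-semidefinite correction $M^{(2)}_r$ of Lemma \ref{lemma3.2}, and no monotonicity-in-rank property is available, so your ``I expect'' is not a proof. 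That said, you should know that the paper's own argument does not resolve this case either: it applies the single bound $\tfrac{\kappa}{2\sigma^2}\|B^*-\hat B(\lambda)\|_F^2+b>0$ uniformly over $\Omega_+\cup\Omega_-$, which is conclusive only when $\|B^*-\hat B(\lambda)\|_F$ stays bounded away from zero. That holds on $\Omega_-$ (by the Eckart--Young argument you give), but for overfitted $\lambda$ the estimator can approach $B^*$, both terms can vanish, and the sign of the sum then depends on precisely the rate comparison you outline. So your proposal, if (i) and (ii) were established, would actually be more rigorous than the printed proof on $\Omega_+$; as it stands, it is incomplete in exactly the place where the paper's proof is also weakest.
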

\begin{proof}
For any $\lambda\in \Omega_{+}\bigcup\Omega_{-}$,
\begin{equation*}
\begin{aligned}
\rm{BIC}_{\lambda}-\rm{BIC}_{\lambda_{\textit{n}}}
=&\underset{a}{\underbrace{\rm{log}\left(\frac{1}{\textit{n}}
\sum\limits_{\textit{i}=1}^{\textit{n}}\left(\textit{y}_{\textit{i}}-\left\langle \textit{X}_{\textit{i}},\hat{\textit{B}}(\lambda)\right\rangle\right)^{2}\right)-\rm{log}\left(\frac{1}{\textit{n}}\sum\limits_{\textit{i}=1}^{\textit{n}}
\left(\textit{y}_{\textit{i}}-\left\langle \textit{X}_{\textit{i}},\hat{\textit{B}}(\lambda_{n})\right\rangle\right)^{2}\right)}}+\underset{b}{\underbrace{\hat{\rm{df}}_{\lambda}\cdot\frac{\rm{log}(\textit{n})}{\textit{n}}
-\hat{\rm{df}}_{\lambda_{n}}\cdot\frac{\rm{log}(\textit{n})}{\textit{n}}}}.
\end{aligned}
\end{equation*}

For a, we know that
\begin{equation*}
\begin{aligned}
a&=\rm{log}\left(\frac{1}{\textit{n}}
\left\|\textbf{y}-\mathcal{X}\rm{vec}\left(\hat{\textit{B}}(\lambda)\right)\right\|^{2}\right)-
\rm{log}\left(\frac{1}{\textit{n}}
\left\|\textbf{y}-\mathcal{X}\rm{vec}\left(\hat{\textit{B}}(\lambda_{\textit{n}})\right)\right\|^{2}\right)\\
&= \rm{log}\left(\frac{\left\|\textbf{y}-\mathcal{X}\rm{vec}(\hat{\textit{B}}(\lambda))\right\|^{2}/\textit{n}}
{\left\|\textbf{y}-\mathcal{X}\rm{vec}\left(\hat{\textit{B}}(\lambda_{\textit{n}})\right)\right\|^{2}/\textit{n}}\right)
=\rm{log}\left(\frac{\|\textbf{y}-\mathcal{X}\rm{vec}\left(\hat{\textit{B}}(\lambda_{\textit{n}})\right)+\mathcal{X}\rm{vec}\left(\hat{\textit{B}}(\lambda_{\textit{n}})-\hat{\textit{B}}(\lambda)\right)\|^{2}/\textit{n}}
{\left\|\textbf{y}-\mathcal{X}\rm{vec}\left(\hat{\textit{B}}(\lambda_{\textit{n}})\right)\right\|^{2}/\textit{n}}\right)\\
&= \rm{log}\left(1+\underset{\textit{a}1}{\underbrace{\frac{\rm{vec}\left(\hat{\textit{B}}(\lambda_{\textit{n}})-\hat{\textit{B}}(\lambda)\right)^{\top}\left(\mathcal{X}^{\top}\mathcal{X}/\textit{n}\right)\rm{vec}\left(\hat{\textit{B}}(\lambda_{\textit{n}})-\hat{\textit{B}}(\lambda)\right)}
{\left\|\textbf{y}-\mathcal{X}\rm{vec}(\hat{\textit{B}}(\lambda_{\textit{n}}))\right\|^{2}/\textit{n}}}}
+\underset{\textit{a}2}{\underbrace{\frac{2\left(\textbf{y}-\mathcal{X}\rm{vec}\left(\hat{\textit{B}}(\lambda_{\textit{n}})\right)\right)^{\top}\mathcal{X}\rm{vec}\left(\hat{\textit{B}}(\lambda_{\textit{n}})-\hat{\textit{B}}(\lambda)\right)/\textit{n}}
{\left\|\textbf{y}-\mathcal{X}\rm{vec}\left(\hat{\textit{B}}(\lambda_{\textit{n}})\right)\right\|^{2}/\textit{n}}}}\right)
\end{aligned}
\end{equation*}
%\begin{equation*}
%\begin{aligned}
%(a)&=\rm{log}\left(\frac{1}{\textit{n}}\sum\limits_{\textit{i}=1}^{\textit{n}}\left(\textit{y}_{\textit{i}}-\langle \textit{X}_{\textit{i}},\hat{\textit{B}}(\lambda)\rangle\right)^{2}\right)-\rm{log}\left(\frac{1}{\textit{n}}\sum\limits_{\textit{i}=1}^{\textit{n}}
%\left(\textit{y}_{\textit{i}}-\langle \textit{X}_{\textit{i}},\hat{\textit{B}}(\lambda_{\textit{n}})\rangle\right)^{2}\right)\\
%&=\rm{log}\left(\frac{\|\textbf{y}-\mathcal{X}\rm{vec}(\hat{\textit{B}}(\lambda))\|^{2}}
%{\|\textbf{y}-\mathcal{X}\rm{vec}(\hat{\textit{B}}(\lambda_{\textit{n}}))\|^{2}}\right)\\
%&=\rm{log}\left(1+\frac{\rm{vec}(\hat{\textit{B}}(\lambda)-\hat{\textit{B}}(\lambda_{\textit{n}}))^{\top}\mathcal{X}^{\top}\mathcal{X}\rm{vec}(\hat{\textit{B}}(\lambda)-\hat{\textit{B}}(\lambda_{\textit{n}}))/\textit{n}}{\|\textbf{y}-\mathcal{X}\rm{vec}(\hat{\textit{B}}(\lambda_{\textit{n}}))\|^{2}/\textit{n}}\right)
%\end{equation*}
For any $x$, we have log$(1+x)\geq\min\{\rm{log}(2),0.5\textit{x}\}$. So, $a\geq\min\{\rm{log}$(2),$0.5(a1+a2)\}$. Next, we compute these values. Remark \ref{remark3.1} illustrates that $P\left(\hat{\textit{B}}(\lambda_{\textit{n}})=B^{*}\right)\rightarrow1$ when $n\rightarrow\infty$, which leads to
\begin{center}
$\rm{vec}\left(\hat{\textit{B}}(\lambda_{\textit{n}})-\hat{\textit{B}}(\lambda)\right)^{\top}\frac{\mathcal{X}^{\top}\mathcal{X}}{\textit{n}}\rm{vec}\left(\hat{\textit{B}}(\lambda_{\textit{n}})-\hat{\textit{B}}(\lambda)\right)
\rightarrow$vec$\left(B^{*}-\hat{\textit{B}}(\lambda)\right)^{\top}
E\left(\rm{vec}(\textit{X})\rm{vec}(\textit{X})^{\top}\right)$ vec$\left(B^{*}-\hat{\textit{B}}(\lambda)\right)$.
\end{center}
Based on the fact that $\textbf{y}=\mathcal{X}\rm{vec}(\textit{B}^{*})+\varepsilon$ and the condition (C4),
\begin{equation*}
\begin{aligned}
\frac{\left\|\textbf{y}-\mathcal{X}\rm{vec}\left(\hat{\textit{B}}(\lambda_{\textit{n}})\right)\right\|^{2}}{n}
&=\frac{\left\|\mathcal{X}\rm{vec}\left(B^{*}-\hat{\textit{B}}(\lambda)\right)\right\|^{2}}{n}
+\frac{2}{n}\varepsilon^{\top}\mathcal{X}\rm{vec}\left(\textit{B}^{*}-\hat{\textit{B}}(\lambda)\right)+\frac{\varepsilon^{\top}\varepsilon}{\textit{n}}\rightarrow \sigma^{2}.
\end{aligned}
\end{equation*}
Combing these facts and (C3), we have $P\left(a1>\kappa\left\|B^{*}-\hat{\textit{B}}(\lambda)\right\|^{2}_{F}/\sigma^{2}\right)\rightarrow1$
and P$\left(a2=0\right)\rightarrow1$  when $n\rightarrow\infty$, which leads to
\begin{center}
$P\left(a\geq\min\left\{\rm{log}(2),\frac{\kappa}{2\sigma^{2}}\left\|\textit{B}^{*}-\hat{\textit{B}}(\lambda)\right\|^{2}_{\textit{F}}\right\}\right)\rightarrow1$ with $n\rightarrow\infty$.
\end{center}

For b, we consider the bound of $\hat{\rm{df}}_{\lambda}$. By Cauchy-Schwartz inequality, it is sure that
\begin{equation*}
\begin{aligned}
\hat{\rm{df}}_{\lambda}&=\frac{1}{\textit{n}}\sum\limits_{\textit{k}=1}^{\textit{n}}
\rm{vec}\left(\textit{X}_{\textit{k}}\right)^{\top}\textit{M}_{\textit{r}}^{+}\rm{vec}(\textit{X}_{\textit{k}})=\left\langle \frac{1}{\textit{n}}\sum\limits_{\textit{k}=1}^{\textit{n}}\rm{vec}\left(\textit{X}_{\textit{k}}\right)
\rm{vec}\left(\textit{X}_{\textit{k}}\right)^{\top},\textit{M}_{\textit{r}}^{+}\right\rangle\\
&\leq \left\|\frac{1}{\textit{n}}\sum\limits_{\textit{k}=1}^{\textit{n}}\rm{vec}\left(\textit{X}_{\textit{k}}\right)
\rm{vec}\left(\textit{X}_{\textit{k}}\right)^{\top}\right\|_{F}\cdot\left\|\textit{M}_{\textit{r}}^{+}\right\|_{2}\leq \frac{1}{\textit{n}}\sum\limits_{\textit{k}=1}^{\textit{n}}\left\|\textit{X}_{\textit{k}}\right\|^{2}_{F}\cdot \left\|\textit{M}_{\textit{r}}^{+}\right\|_{2}.
\end{aligned}
\end{equation*}
When $n\rightarrow\infty$, $\frac{1}{n}\sum\limits_{k=1}^{n}\left\|X_{k}\right\|^{2}_{F}\rightarrow E(\|X\|^{2}_{F})$, which is bounded because of (C1). In addition, $\left\|\textit{M}_{\textit{r}}^{+}\right\|_{2}
=\frac{1}{\sigma_{\tilde{r}}\left(\textit{M}_{\textit{r}}\right)}$ with $\tilde{r}$ being the rank of $\textit{M}_{\textit{r}}$, which is bounded when $n\rightarrow\infty$. Combining the fact that $\frac{\rm{log}(\textit{n})}{\textit{n}}\rightarrow0$,  we have  log$(2)+b>0$ and $\frac{\kappa}{2\sigma^{2}}\left\|\textit{B}^{*}-\hat{\textit{B}}(\lambda)\right\|^{2}_{F}+b>0$ with probability tending to one. Therefore, the result about $P\left(\rm{BIC}_{\lambda}-\rm{BIC}_{\lambda_{\textit{n}}}>0\right)\rightarrow1$ holds,  when $n\rightarrow\infty$.
\end{proof}

According to this theorem, the information criterion BIC$_{\lambda}$ will not select the underfitted or overfitted models when the sample size increasing, which means BIC$_{\lambda}$ can select the true model consistently.
\section{Numerical Experiments}
This section will show some numerical results of our proposed Bayesian information criterion (BIC) for tuning parameter selection. To compare our method with the popular tuning parameter selection methods, we extend the corresponding expressions for the model (\ref{eq1}) without further proofs, which includes Akaike Information Criterion (AIC),  Akaike Information Criterion corrected (AICc) and cross validation. The difference between BIC, AIC and AICc is on the second term of their expressions, which are showed as follows.

AIC is defined as
\begin{eqnarray*}
\rm{AIC}_{\lambda}=\rm{log}\left(\sum\limits_{\textit{i}=1}^{\textit{n}}
\frac{\left(\textit{y}_{\textit{i}}-\left\langle \textit{X}_{\textit{i}},\hat{\textit{B}}(\lambda)\right\rangle\right)^{2}}{\textit{n}}\right)+\hat{\rm{df}}_{\lambda}\cdot\frac{2}{\textit{n}}.
\end{eqnarray*}

AICc is defined as
\begin{eqnarray*}
	\rm{AICc}_{\lambda}=\rm{log}\left(\sum\limits_{\textit{i}=1}^{\textit{n}}
\frac{\left(\textit{y}_{\textit{i}}-\left\langle \textit{X}_{\textit{i}},\hat{\textit{B}}(\lambda)\right\rangle\right)^{2}}{\textit{n}}\right)+\hat{\rm{df}}_{\lambda}
\cdot\frac{2}{\textit{n}}+\frac{2\hat{\rm{df}}_{\lambda}\cdot(\hat{\rm{df}}_{\lambda}+1)}{\textit{n}-\hat{\rm{df}}_{\lambda}-1}.
\end{eqnarray*}

$K$-fold cross-validation is a popular method used to evaluate and compare the generalization ability to predict new data. This approach works by splitting the dataset into $K$ smaller subsets (folds). One of these folds is used as a test set to validate the performance of the model, while the other $K-1$ subsets are used as training sets for the model. This process is repeated $K$ times, and each time selects a different fold as the test set and the rest as the training set. Finally, the $K$ test results are averaged or otherwise integrated to obtain an overall model performance evaluation. The usual choice of $K$ is 5 and 10, which are 5-fold cross validation and 10-fold cross validation, respectively.

In the following numerical experiments, we compare our BIC with AIC, AICc, 5-fold cross validation and 10-fold cross validation. The comparison indexes include the selected best tuning parameter $\lambda^{*}$, the mean square error (MSE) on the predictor $y$, the rank of the selected solution $r$ and the computational time. Here, the MSE is defined as $\|\hat{y}-y\|^{2}$, where $\hat{y}$ is the predicted response variable under the selected tuning parameter. Our numerical experiments are all implemented on MATLAB R2023b with AMD Ryzen 5 5600H with Radeon Graphics  3.30 GHz and 16G RAM.
\begin{table}[htbp]
\caption{Simulation results on prediction matrix dimension as $p_{1}=15$ and $p_{2}=45$. The sample size includes $10^{3}$, $10^{2}$ and $50$, respectively.}
\begin{tabular}{ll|llll}
\hline
sample size                 & method     & $\lambda^{*}$ & MSE & r & time (s) \\ \hline
\multirow{5}{*}{$n=10^{3}$} & BIC        & 22.860        & 7.2380    & \textbf{2}  &  \textbf{101.69}        \\
                            & AIC        & 0.0600        & 0.0808    & 6  &   \textbf{101.69}       \\
                            & AIC$_{c}$  & 0.0003        & 150.23    & 15 &  \textbf{101.69}        \\
                            & 5-fold CV  & 0.1361        & 0.0259    & \textbf{2}  &   178.12        \\
                            & 10-fold CV & 0.1361        & 0.0259    & \textbf{2}  &   672.54       \\ \hline
\multirow{5}{*}{$n=10^{2}$} & BIC        & 0.8335        & 0.0154 & \textbf{4} & \textbf{121.22} \\
		                    & AIC        & 0.8335       & 0.0154  & \textbf{4} & \textbf{121.22} \\
		                    & AICc       & 0.0004       & 0.0002   & 15         & \textbf{121.22} \\
		                    & 5-fold CV  & 0.0006        & 1.22$\times 10^{-8}$ & 14    & 344.23           \\
		                    & 10-fold CV & 0.0006       & 1.22$\times 10^{-8}$ & 14      & 5112.1         \\ \hline
\multirow{5}{*}{$n=50$}     & BIC        & 6.8011       & 0.0154  & \textbf{4} & \textbf{110.38} \\
		                    & AIC        & 6.8011       & 0.0154  & \textbf{4} & \textbf{110.38} \\
		                    & AICc       & 0.0906       & 0.0002 & \textbf{4} & \textbf{110.38} \\
		                    & 5-fold CV  & 0.0026       & 1.22$\times 10^{-8}$ & 10         & 344.38          \\
		                    & 10-fold CV & 0.0026       & 1.22$\times 10^{-8}$ & 10         & 484.55         \\ \hline
\end{tabular}
\end{table}
\subsection{Simulation Data}
As in Bach \cite{bach2008consistency}, we simulate a lots of data sets with different values $p_{1}$, $p_{2}$ and $n$, where the prediction matrixes follow the standard normal distribution/Gaussian distribution. Here, we select $p_{1}\in\{15,25,35\}$, $p_{2}\in\{30,45\}$ and $n\in\{10^{3},10^{2},50\}$. We generate random i.i.d. data matrixes $P_{i}\in\mathbb{R}^{p_{1}}$ and $Q_{i}\in\mathbb{R}^{p_{2}}$, and we set a true solution $B^{*}\in\mathbb{R}^{p\times q}$ with rank 2. All elements of these matrix  distribute with the standard normal distribution. Then,  $y_{i}=\langle P_{i}Q^{\top}_{i},B^{*}\rangle + \epsilon_{i},i=1,2,\cdots,n$, where $\epsilon_{i}$ have i.i.d components with normal distributions with zero mean and 0.1 standard variance.  To select the best tuning parameter, same as the setting in Bach \cite{bach2008consistency}, we set the tuning parameter sequence as $\lambda_{k}=$e$^{\rm{log}(\lambda_{\max})+(k-1)\times\frac{\rm{log}(\lambda_{\min})-\rm{log}(\lambda_{\max})}{100}}$ with $k=1,2,\cdots,100$.  The numerical results are reported in Tables 1 to 3.
\begin{table}[htbp]
\caption{Simulation results on prediction matrix dimension as $p_{1}=25$ and $p_{2}=30$. The sample size includes $10^{3}$, $10^{2}$ and $50$, respectively.}
\begin{tabular}{ll|llll}
\hline
sample size                 & method     & $\lambda^{*}$ & MSE & r & time (s) \\ \hline
\multirow{5}{*}{$n=10^{3}$} & BIC         & 0.1092 & 0.1210  & \textbf{3} & \textbf{56.763} \\
		                    & AIC         & 0.0218 & 0.0487 & 7& \textbf{56.763} \\
		                    & AICc        & 0.0489 & 0.0569 & 4 & \textbf{56.763} \\
		                    & 5-fold CV  & 0.0218 & 0.0487& 7        & 128.60        \\
		                    & 10-fold CV & 0.0218 & 0.0487 & 7        & 266.08      \\ \hline
\multirow{5}{*}{$n=10^{2}$} & BIC         & 0.0176 & 0.0001 & 14 & \textbf{89.548} \\
		                    & AIC         & 0.0176 & 0.0001 & 14& \textbf{89.548} \\
		                    & AICc        & 0.0596 & 0.0003 & \textbf{10} & \textbf{89.548} \\
		                    & 5-fold CV  & 0.0176 & 0.0001& 14       & 301.95        \\
		                    & 10-fold CV & 0.0176 & 0.0001 & 14        & 802.78      \\\hline
\multirow{5}{*}{$n=50$}     & BIC         & 0.6608 & 0.0629 & \textbf{6} & \textbf{112.48} \\
		                    & AIC         & 0.6608 & 0.0629 & \textbf{6}& \textbf{112.48} \\
		                    & AICc        & 0.0277 & 0.3611 & 12 & \textbf{112.48} \\
		                    & 5-fold CV  & 0.0910 & 0.0292& 7       & 259.31        \\
		                    & 10-fold CV & 0.0612 & 0.0279 & 9        & 498.99      \\\hline
\end{tabular}
\end{table}
\begin{table}[htbp]
\caption{Simulation results on prediction matrix dimension as $p_{1}=35$ and $p_{2}=30$. The sample size includes $10^{3}$, $10^{2}$ and $50$, respectively.}
\begin{tabular}{ll|llll}
\hline
sample size                 & method     & $\lambda^{*}$ & MSE & r & time (s) \\ \hline
\multirow{5}{*}{$n=10^{3}$} & BIC         & 0.0089 & 0.0134  & \textbf{13} & \textbf{190.15} \\
		                    & AIC         & 0.0026 & 0.0050  & 22          & \textbf{190.15} \\
		                    & AICc        & 0.0059 & 0.0088 & 16           & \textbf{190.15} \\
		                   & 5-fold CV    & 0.0007 & 0.0025 & 30         & 566.70        \\
		                   & 10-fold CV   & 0.0007 & 0.0025 & 30        & 1113.4      \\ \hline
\multirow{5}{*}{$n=10^{2}$}& BIC         & 0.4744 & 0.2561  & \textbf{10} & \textbf{178.11} \\
		                   & AIC         & 0.4744 & 0.2561 & \textbf{10}& \textbf{178.11} \\
		                   & AICc        & 0.3883 & 0.3430 & \textbf{10} & \textbf{178.11} \\
		                   & 5-fold CV  & 0.2602 & 0.1944  & 11         & 655.69        \\
		                   & 10-fold CV & 0.2602 & 0.1944  & 11        & 1125.49      \\ \hline
\multirow{5}{*}{$n=50$}    & BIC         & 1.2896 & 66.480  & \textbf{7} & \textbf{158.60} \\
		                   & AIC         & 1.2896 & 66.480 & \textbf{7}& \textbf{158.60} \\
		                   & AICc        & 0.0009 & 158.11 & 25 & \textbf{158.60} \\
		                   & 5-fold CV  & 1.0588 & 58.339& \textbf{7} & 448.32        \\
		                   & 10-fold CV & 1.0588 & 58.338 & \textbf{7} & 1397.4      \\ \hline
\end{tabular}
\end{table}

Based on these results in Tables 1 to 3, there are some conclusions. (i) Comparing to the other tuning parameter methods, our proposed BIC will select the best tuning parameter with the smallest solution rank. Because the true rank of the solution is 2,  BIC will tend to select the best tuning parameter with the same true solution rank. (ii) The computational time of BIC is same with AIC and AICc, and obviously smaller than that of  5-fold CV and 10-fold CV. Because BIC, AIC and AICc are different only on the second term, their computational time are same. Comparing to cross validation, our BIC surely reduce the computational time, because that our method  does not need to solve the model many times as cross validation. (iii)  BIC generally performs same with the other  methods on MSE and  the MSE under BIC is larger than cross validation in some special cases, which is a trade off of the computational effect and computational cost.

\subsection{Real Data}
This section applies the tuning parameter selection process of the model (\ref{eq1}) on the COVID-19 data set and Bike data set, and compares BIC with AIC, AICc, 5-fold CV and 10-fold CV in Table 4 and Table 5. For every data set, we standardize the data set by columns, i.e., every number is subtracted by the column mean and divided by the column standard deviation. To select the best tuning parameter, we set the tuning parameter sequence as $\lambda_{k}=0.618^{k}\lambda_{\max}$ with $k=1,2,\cdots,20$ and $\lambda_{\max}=\left\|W^{\top}_{1}\cdots\left(\sum\limits_{i=1}^{n}y_{i}X_{i}\right)
W^{\top}_{2}\right\|_{2}$. Here, $\lambda_{\max}$ is the smallest tuning parameter such that the solution of (\ref{eq1}) is zero.

The COVID-19 dataset (\cite{li2021double, wahltinez2020covid})  consists of daily measurements related to COVID-19 for 138 countries around the world. This data set records the newly confirmed case in the period June 13, 2020 to July 12, 2020. In addition, this data also includes the 41 COVID-19 related government policies in each day, i.e.,  school-closing, restrictions on gathering, stay-at-home requirement, income support and so on. Each of these policies may have several levels, for example, school closing includes no closing, recommend closing, require some closing (e.g. just high school) or require all closing, which varied during the 30-day period. Therefore, for every sample, its prediction matrix is  $X\in\mathbb{R}^{41 \times 30}$  and response is the newly confirmed case. The sample size is $n = 138$.
\begin{table}[htbp]
	\caption{COVID data set results with $p_{1}=41$, $p_{2}=30$ and $n=138$.}
	\centering
	\begin{tabular}{lllll}
		\hline
		Method      & $\lambda^{*}$   & MSE      & rank       & time            \\
		\hline
		BIC         &0.1346 & 0.0231 & \textbf{1} & \textbf{19.088} \\
		AIC         &0.1346 & 0.0231 & \textbf{1}& \textbf{19.088} \\
		AICc        &0.1346 & 0.0231 & \textbf{1} & \textbf{19.088} \\
		5-fold CV   &0.1346 & 0.0231 & \textbf{1} & 37.855     \\
		10-fold CV  &0.1346 & 0.0231 & \textbf{1} & 74.117     \\
		\hline
	\end{tabular}
\end{table}

The Bike sharing data set \cite{fanaee2014event} includes matrixes data from January 1st, 2011 to December 31th, 2012. For every data during this period, the recorded data includes 6 weather conditions (including sunny, mist or others, temperature, apparent temperature, humidity and wind speed) every hour, which makes the prediction matrix dimension as $X\in\mathbb{R}^{24 \times 6}$ and the sample size $n=731$. The response variable is the daily aggregated count of rented bikes.
\begin{table}[htbp]
\caption{Bike sharing data set results with $p_{1}=24$, $p_{2}=6$ and $n=731$.}
\centering
\begin{tabular}{lllll}
	\hline
	Method      & $\lambda^{*}$   & MSE      & rank       & time           \\
	\hline
	BIC         & 1.5074 & 0.5173 & \textbf{1} & \textbf{1.6880} \\
	AIC         & 0.3557 & 0.4911 & 2& \textbf{1.6880} \\
	AICc        & 0.0011 & 0.4029 & 6 & \textbf{1.6880} \\
	5-fold CV   & 0.0011 & 0.4029 & 6 &7.1090     \\
   10-fold CV  & 0.0011 & 0.4029 & 6  & 9.2680      \\
	\hline
\end{tabular}
\end{table}

On these real data sets, our proposed BIC performs better than AIC and AICc because that the rank under $\lambda^{*}$ with BIC is the smallest and their MSE values are similar. In addition, BIC outperforms 5-fold CV and 10-fold CV on computational time.
\section{Conclusion}

To select the best tuning parameter of the adaptive nuclear norm regularized trace regression model, we propose a Bayesian information criterion (BIC) in this paper. We first compute the unbiased  estimator of the degrees of the freedom of the model, which is the basic element of the BIC. Under this estimator, we build up the BIC and prove its rank selection consistency, i.e., BIC will select the tuning parameter achieving the true solution and true solution rank in probability with the sample size increasing. To evaluate the  performance of BIC, we compare it with  AIC, AICc, 5-fold CV and 10-fold CV on some simulation data and real data sets. The numerical results show that BIC performs better than the other methods.
\section*{Acknowledgements}
This work was supported by the National Natural Science Foundation of China (12371322).
%% The Appendices part is started with the command \appendix;
%% appendix sections are then done as normal sections
%% \appendix

%% \section{}
%% \label{}

%% If you have bibdatabase file and want bibtex to generate the
%% bibitems, please use
%%

\bibliographystyle{plain}
\bibliography{Bib,refs}

\begin{thebibliography}{10}

\bibitem{abbruzzo2019selecting}
Antonino Abbruzzo, Ivan Vuja{\v{c}}i{\'c}, Angelo~M Mineo, and Ernst~C Wit.
\newblock Selecting the tuning parameter in penalized gaussian graphical
  models.
\newblock {\em Statistics and Computing}, 29(3):559--569, 2019.

\bibitem{akaike1970statistical}
Hirotugu Akaike.
\newblock Statistical predictor identification.
\newblock {\em Annals of the Institute of Statistical Mathematics},
  22(1):203--217, 1970.

\bibitem{bach2008consistency}
Francis~R Bach.
\newblock Consistency of trace norm minimization.
\newblock {\em Journal of Machine Learning Research}, 9:1019--1048, 2008.

\bibitem{BECK2017FIRST}
Amir Beck.
\newblock {\em First-Order Methods in Optimization}.
\newblock SIAM, 2017.

\bibitem{chetverikov2021cross}
Denis Chetverikov, Zhipeng Liao, and Victor Chernozhukov.
\newblock On cross-validated lasso in high dimensions.
\newblock {\em The Annals of Statistics}, 49(3):1300--1317, 2021.

\bibitem{datta2020note}
Abhirup Datta and Hui Zou.
\newblock A note on cross-validation for lasso under measurement errors.
\newblock {\em Technometrics}, 62(4):549--556, 2020.

\bibitem{2016An}
Mark~A Davenport and Justin Romberg.
\newblock An overview of low-rank matrix recovery from incomplete observations.
\newblock {\em IEEE Journal of Selected Topics in Signal Processing},
  10(4):608--622, 2016.

\bibitem{efron2004estimation}
Bradley Efron.
\newblock The estimation of prediction error: covariance penalties and
  cross-validation.
\newblock {\em Journal of the American Statistical Association},
  99(467):619--632, 2004.

\bibitem{elsener2018robust}
Andreas Elsener and Sara van~de Geer.
\newblock Robust low-rank matrix estimation.
\newblock {\em The Annals of Statistics}, 46(6):3481--3509, 2018.

\bibitem{fan2019generalized}
Jianqing Fan, Wenyan Gong, and Ziwei Zhu.
\newblock Generalized high-dimensional trace regression via nuclear norm
  regularization.
\newblock {\em Journal of Econometrics}, 212(1):177--202, 2019.

\bibitem{fan2001variable}
Jianqing Fan and Runze Li.
\newblock Variable selection via nonconcave penalized likelihood and its oracle
  properties.
\newblock {\em Journal of the American Statistical Association},
  96(456):1348--1360, 2001.

\bibitem{fan2013tuning}
Yingying Fan and Cheng~Yong Tang.
\newblock Tuning parameter selection in high dimensional penalized likelihood.
\newblock {\em Journal of the Royal Statistical Society: Series B (Statistical
  Methodology)}, 75(3):531--552, 2013.

\bibitem{fanaee2014event}
Hadi Fanaee-T and Joao Gama.
\newblock Event labeling combining ensemble detectors and background knowledge.
\newblock {\em Progress in Artificial Intelligence}, 2:113--127, 2014.

\bibitem{hamidi2022low}
Nima Hamidi and Mohsen Bayati.
\newblock On low-rank trace regression under general sampling distribution.
\newblock {\em Journal of Machine Learning Research}, 23(321):1--49, 2022.

\bibitem{hirose2013tuning}
Kei Hirose, Shohei Tateishi, and Sadanori Konishi.
\newblock Tuning parameter selection in sparse regression modeling.
\newblock {\em Computational Statistics and Data Analysis}, 59:28--40, 2013.

\bibitem{homrighausen2013lasso}
Darren Homrighausen and Daniel~J McDonald.
\newblock The lasso, persistence, and cross-validation.
\newblock In {\em the International Conference on Machine Learning}, pages
  1031--1039. PMLR, 2013.

\bibitem{homrighausen2017risk}
Darren Homrighausen and Daniel~J McDonald.
\newblock Risk consistency of cross-validation with lasso-type procedures.
\newblock {\em Statistica Sinica}, 27:1017--1036, 2017.

\bibitem{horn2012matrix}
Roger~A Horn and Charles~R Johnson.
\newblock {\em Matrix analysis}.
\newblock Cambridge university press, 2012.

\bibitem{kim2012consistent}
Yongdai Kim, Sunghoon Kwon, and Hosik Choi.
\newblock Consistent model selection criteria on high dimensions.
\newblock {\em Journal of Machine Learning Research}, 13(1):1037--1057, 2012.

\bibitem{lei2020cross}
Jing Lei.
\newblock Cross-validation with confidence.
\newblock {\em Journal of the American Statistical Association},
  115(532):1978--1997, 2020.

\bibitem{li2021double}
Mei Li, Lingchen Kong, and Zhihua Su.
\newblock Double fused lasso regularized regression with both matrix and vector
  valued predictors.
\newblock {\em Electronic Journal of Statistics}, 15:1909--1950, 2021.

\bibitem{2014Multivariate}
Han Liu, Lie Wang, and Tuo Zhao.
\newblock Multivariate regression with calibration.
\newblock In {\em Advances in Neural Information Processing Systems}, pages
  127--135, 2014.

\bibitem{2010Interior}
Zhang Liu and Lieven Vandenberghe.
\newblock Interior-point method for nuclear norm approximation with application
  to system identification.
\newblock {\em SIAM Journal on Matrix Analysis and Applications},
  31(3):1235--1256, 2010.

\bibitem{lu2012convex}
Zhaosong Lu, Renato~DC Monteiro, and Ming Yuan.
\newblock Convex optimization methods for dimension reduction and coefficient
  estimation in multivariate linear regression.
\newblock {\em Mathematical Programming}, 131(1):163--194, 2012.

\bibitem{magnus2019matrix}
Jan~R Magnus and Heinz Neudecker.
\newblock {\em Matrix differential calculus with applications in statistics and
  econometrics}.
\newblock John Wiley \& Sons, 2019.

\bibitem{mazumder2010spectral}
Rahul Mazumder, Trevor Hastie, and Robert Tibshirani.
\newblock Spectral regularization algorithms for learning large incomplete
  matrices.
\newblock {\em Journal of Machine Learning Research}, 11(11):2287--2322, 2010.

\bibitem{negahban2011estimation}
Sahand Negahban and Martin~J Wainwright.
\newblock Estimation of (near) low-rank matrices with noise and
  high-dimensional scaling.
\newblock {\em The Annals of Statistics}, 39(2):1069--1097, 2011.

\bibitem{2010Statistical}
Rassoul Noorossana, M~Eyvazian, Amirhossein Amiri, and Mahmoud~A Mahmoud.
\newblock Statistical monitoring of multivariate multiple linear regression
  profiles in phase i with calibration application.
\newblock {\em Quality and Reliability Engineering International},
  26(3):291--303, 2010.

\bibitem{rockafellar2015convex}
Ralph~Tyrell Rockafellar.
\newblock {\em Convex Analysis}.
\newblock Princeton University Press, 2015.

\bibitem{rothman2010sparse}
Adam~J Rothman, Elizaveta Levina, and Ji~Zhu.
\newblock Sparse multivariate regression with covariance estimation.
\newblock {\em Journal of Computational and Graphical Statistics},
  19(4):947--962, 2010.

\bibitem{schwarz1978estimating}
Gideon Schwarz.
\newblock Estimating the dimension of a model.
\newblock {\em The Annals of Statistics}, 6(2):461--464, 1978.

\bibitem{stein1981estimation}
Charles~M Stein.
\newblock Estimation of the mean of a multivariate normal distribution.
\newblock {\em The Annals of Statistics}, pages 1135--1151, 1981.

\bibitem{sun2013consistent}
Wei Sun, Junhui Wang, and Yixin Fang.
\newblock Consistent selection of tuning parameters via variable selection
  stability.
\newblock {\em Journal of Machine Learning Research}, 14(1):3419--3440, 2013.

\bibitem{tibshirani1996regression}
Robert Tibshirani.
\newblock Regression shrinkage and selection via the lasso.
\newblock {\em Journal of the Royal Statistical Society: Series B
  (Methodological)}, 58(1):267--288, 1996.

\bibitem{tibshirani2005sparsity}
Robert Tibshirani, Michael Saunders, Saharon Rosset, Ji~Zhu, and Keith Knight.
\newblock Sparsity and smoothness via the fused lasso [j].
\newblock {\em Journal of the Royal Statistical Society: Series B (Statistical
  Methodology)}, 67(1):91--108, 2005.

\bibitem{2011Nuclear}
Koltchinskii Vladimir, Lounici Karim, and Alexandre~B. Tsybakov.
\newblock Nuclear-norm penalization and optimal rates for noisy low-rank matrix
  completion.
\newblock {\em The Annals of Statistics}, 39(5):2302--2329, 2011.

\bibitem{wahltinez2020covid}
O~Wahltinez, K~Murphy, M~Brenner, et~al.
\newblock Covid-19 open-data: curating a fine-grained, global-scale data
  repository for sars-cov-2. 2020.
\newblock {\em Work in progress}, 2020.

\bibitem{wang2007unified}
Hansheng Wang and Chenlei Leng.
\newblock Unified lasso estimation by least squares approximation.
\newblock {\em Journal of the American Statistical Association},
  102(479):1039--1048, 2007.

\bibitem{wang2009shrinkage}
Hansheng Wang, Bo~Li, and Chenlei Leng.
\newblock Shrinkage tuning parameter selection with a diverging number of
  parameters.
\newblock {\em Journal of the Royal Statistical Society: Series B (Statistical
  Methodology)}, 71(3):671--683, 2009.

\bibitem{wang2007tuning}
Hansheng Wang, Runze Li, and Chih-Ling Tsai.
\newblock Tuning parameter selectors for the smoothly clipped absolute
  deviation method.
\newblock {\em Biometrika}, 94(3):553--568, 2007.

\bibitem{wang2011consistent}
Tao Wang and Lixing Zhu.
\newblock Consistent tuning parameter selection in high dimensional sparse
  linear regression.
\newblock {\em Journal of Multivariate Analysis}, 102(7):1141--1151, 2011.

\bibitem{watson1992characterization}
G~Alistair Watson.
\newblock Characterization of the subdifferential of some matrix norms.
\newblock {\em Linear Algebra and its Applications}, 170(0):33--45, 1992.

\bibitem{wei2022high}
Zhenyu Wei and Thomas~CM Lee.
\newblock High-dimensional multi-task learning using multivariate regression
  and generalized fiducial inference.
\newblock {\em Journal of Computational and Graphical Statistics}, 31:1--15,
  2022.

\bibitem{wu2020survey}
Yunan Wu and Lan Wang.
\newblock A survey of tuning parameter selection for high-dimensional
  regression.
\newblock {\em Annual Review of Statistics and Its Application}, 7:209--226,
  2020.

\bibitem{yaguang2018consistent}
Li~Yaguang, Wu~Yaohua, and Jin Baisuo.
\newblock Consistent tuning parameter selection in high-dimensional
  group-penalized regression.
\newblock {\em Science China Mathematics}, 62(4):751--770, 2019.

\bibitem{yuan2016degrees}
Ming Yuan.
\newblock Degrees of freedom in low rank matrix estimation.
\newblock {\em Science China Mathematics}, 59(12):2485--2502, 2016.

\bibitem{yuan2007dimension}
Ming Yuan, Ali Ekici, Zhaosong Lu, and Renato Monteiro.
\newblock Dimension reduction and coefficient estimation in multivariate linear
  regression.
\newblock {\em Journal of the Royal Statistical Society: Series B (Statistical
  Methodology)}, 69(3):329--346, 2007.

\bibitem{yuan2006model}
Ming Yuan and Yi~Lin.
\newblock Model selection and estimation in regression with grouped variables.
\newblock {\em Journal of the Royal Statistical Society: Series B (Statistical
  Methodology)}, 68(1):49--67, 2006.

\bibitem{zhang2010nearly}
Cun-Hui Zhang.
\newblock Nearly unbiased variable selection under minimax concave penalty.
\newblock {\em The Annals of Statistics}, 38(2):894--942, 2010.

\bibitem{zhao2010low}
Bo~Zhao, Justin~P Haldar, Cornelius Brinegar, and Zhi-Pei Liang.
\newblock Low rank matrix recovery for real-time cardiac mri.
\newblock In {\em 2010 IEEE International Symposium on Biomedical Imaging: from
  Nano to Macro}, pages 996--999. IEEE, 2010.

\bibitem{zhao2017trace}
Junlong Zhao, Lu~Niu, and Shushi Zhan.
\newblock Trace regression model with simultaneously low rank and row (column)
  sparse parameter.
\newblock {\em Computational Statistics and Data Analysis}, 116:1--18, 2017.

\bibitem{zhou2014regularized}
Hua Zhou and Lexin Li.
\newblock Regularized matrix regression.
\newblock {\em Journal of the Royal Statistical Society: Series B (Statistical
  Methodology)}, 76(2):463--483, 2014.

\bibitem{zhu2020convex}
Yunzhang Zhu.
\newblock A convex optimization formulation for multivariate regression.
\newblock In {\em Advances in Neural Information Processing Systems}, pages
  17652--17661, 2020.

\bibitem{zou2022estimation}
Changliang Zou, Yuan Ke, and Wenyang Zhang.
\newblock Estimation of low rank high-dimensional multivariate linear models
  for multi-response data.
\newblock {\em Journal of the American Statistical Association},
  117(538):693--703, 2022.

\bibitem{zou2006adaptive}
Hui Zou.
\newblock The adaptive lasso and its oracle properties.
\newblock {\em Journal of the American Statistical Association},
  101(476):1418--1429, 2006.

\bibitem{zou2005regularization}
Hui Zou and Trevor Hastie.
\newblock Regularization and variable selection via the elastic net.
\newblock {\em Journal of the Royal Statistical Society: Series B (Statistical
  Methodology)}, 67(2):301--320, 2005.

\bibitem{zou2009adaptive}
Hui Zou and Hao~Helen Zhang.
\newblock On the adaptive elastic-net with a diverging number of parameters.
\newblock {\em The Annals of Statistics}, 37(4):1733--1751, 2009.

\end{thebibliography}
\end{document}